\documentclass[12pt]{article}

\usepackage{graphicx}
\usepackage[utf8]{inputenc}

\usepackage[english]{babel}

\usepackage{a4}
\usepackage{amsmath,amssymb,amsfonts,amsthm} 
\usepackage{mathtools}
\usepackage{cite}
\usepackage{epsfig}

\newtheorem{theorem}{Theorem}[section] 

\newtheorem{lemma}[theorem]{Lemma}
\newtheorem{proposition}[theorem]{Proposition}
\newtheorem{corollary}[theorem]{Corollary}

\numberwithin{equation}{section}

\newcommand{\II}{{\mathbb I}}
\newcommand{\JJ}{{\mathbb J}}

\newcommand{\CC}{{\mathbb C}}
\newcommand{\RR}{{\mathbb R}}

\newcommand{\NN}{{\mathbb N}}

\newcommand{\cB}{{\mathcal{B}}}

\newcommand{\cF}{{\mathcal{F}}}
\newcommand{\cH}{{\mathcal{H}}}

\newcommand{\cS}{{\mathcal{S}}}

\newcommand{\oA}{{\bar{A}}}

\newcommand{\oL}{{\bar{L}}}

\newcommand{\bix}{\mbox{\boldmath $x$}}
\newcommand{\sbix}{{\mbox{\footnotesize \boldmath $x$}}}
\newcommand{\biy}{\mbox{\boldmath $y$}}

\newcommand{\biH}{\mbox{\boldmath $H$}}
\newcommand{\sbiH}{{\mbox{\footnotesize \boldmath $H$}}}

\newcommand{\biP}{\mbox{\boldmath $P$}}

\newcommand{\biQ}{\mbox{\boldmath $Q$}}

\newcommand{\biV}{\mbox{\boldmath $V$}}

\newcommand{\bfA}{\mbox{\boldmath $\mathfrak A$}}
\newcommand{\obfA}{\overline{\mbox{\boldmath $\mathfrak A$}}}

\newcommand{\bfAal}{\mbox{\boldmath $\mathfrak A$}_{\, \balpha_L}}

\newcommand{\bfC}{\mbox{\boldmath $\mathfrak C$}}

\newcommand{\bfCal}{\mbox{\boldmath $\mathfrak C$}_{\, \balpha_L}}

\newcommand{\bfI}{\mbox{\boldmath $\mathfrak I$}}

\newcommand{\bfR}{\mbox{\boldmath $\mathfrak R$}}

\newcommand{\balpha}{\mbox{\boldmath $\alpha$}}
\newcommand{\sbalpha}{\mbox{\footnotesize \boldmath $\alpha$}}

\newcommand{\bSigma}{\mbox{\boldmath $\Sigma$}}

\newcommand{\bnabla}{\mbox{\boldmath $\nabla$}}
\newcommand{\blk}{{\boldsymbol [ }}
\newcommand{\brk}{{\boldsymbol ] }}
  
\newcommand{\Adl}{\text{Ad} \, e^{itH_L}}
\newcommand{\Ad}{\text{Ad} \, e^{itH}}
\newcommand{\ad}[1]{\text{Ad} \, e^{i #1 \sbiH_{0,n}} }

\def\ie{{\it i.e.\ }}

\newcommand{\be}{\begin{equation}}
\newcommand{\ee}{\end{equation}} 

\begin{document}

\title{Resolvent algebras and limit states of \\
  interacting canonical ensembles}
\author{Detlev Buchholz \\[2mm]
\small Mathematisches Institut, Universit\"at G\"ottingen, \\
\small Bunsenstr.\ 3-5, 37073 G\"ottingen, Germany \\ 
\small detlev.buchholz@mathematik.uni-goettingen.de \\[5pt]
}
\date{}

\maketitle

\noindent \textbf{Abstract.} The limit states of canonical ensembles of a 
large number of interacting bosons 
at a given temperature, which are confined by harmonic forces, are studied in
the framework of the resolvent algebra. It is shown that the limits satisfy
the KMS condition or are ground states,
regardless of the type of interaction. In case of attractive
forces, where the ensembles collapse, observables that become meaningless
in the limit disappear from the limit representations.
For repulsive forces, this
can also happen if condensates with an infinite number of particles
in the same state (proper condensates)
appear in the limit. The resulting structures and their interpretation
are illustrated by a simple model.
The study of vanishing harmonic forces
(thermodynamic limit) involves changes of the dynamics. It is
conveniently based on derivations acting on the algebra. They are
given by the commutator of the Hamiltonians with the elements of
the algebra. To ensure that the images remain in the
algebra, the interaction must be regularized. 
This is accomplished in a manner that has only a minor impact on
the dynamics and may be of broader 
interest. With this input a relation between the strength
of the confining harmonic forces and the number of particles in the ensembles
is derived from the condition that the limit states are to be stationary
(invariant) 
under the adjoint action of the unconfined, spatially homogeneous
limit dynamics. This relation encompasses 
the conditions that are frequently used in studies of Bose-Einstein
condensates. 

\medskip  \noindent
\textbf{Keywords} \  resolvent algebra $\cdot$ interacting bosons $\cdot$
limits of equilibrium states 

\medskip  \noindent
\textbf{Mathematics Subject Classification} \
 46L60 $\cdot$ 81V73 $\cdot$  82B10 

\section{Introduction}
\label{sec1}
\setcounter{equation}{0} 

Resolvent algebras were introduced in \cite{BuGr} as an 
algebraic framework for the study of many body quantum systems. The
framework was extended in \cite{Bu1,Bu2} to non-relativistic
Bose fields in order to cover systems with an arbitrary number
of bosons, including the case of an infinite number. In
contrast to other algebraic approaches, the resolvent algebras are
stable under the action of a large family of dynamics and include
many observables of physical interest. It is the aim of the present article
to show that they shed new light on the properties
of infinite systems. 

\medskip
We begin by recalling some basic notions and results from \cite{Bu1}.
The resolvent algebra of Bose fields is generated by the resolvents
of the field operators, \mbox{$(i \lambda + a^*(f) + a(f))^{-1}$},
$\lambda \in \RR \backslash \{0\}$, 
where $a^*(f), a(f)$ are creation and annihilation operators 
satisfying canonical commutation relations
\be
[a(f), a^*(f)] = \langle f | f \rangle \, 1 \, . 
\ee
They are integrated with test functions 
$f \in \cS(\RR^s)$, where $\RR^s$ denotes space. This choice of
test functions is convenient in the present investigation, 
but any other dense subspace of~$L^2(\RR^s)$ would work.  
The resolvents act on Fock space $\cF$, which is the direct sum of
$n$-particle spaces $\cF_n$, $n \in \NN_0$, the one-particle space $\cF_1$
being identified with~$L^2(\RR^s)$. The completion of this algebra
with regard to
the operator norm on $\cF$ is a unital
C*-algebra, the resolvent algebra $\bfR$. 

\medskip
On the algebra $\bfR$ act the automorphisms generated by the
unitary exponentials 
of the particle number operator $N$, called gauge transformations. 
The subalgebra of all gauge invariant elements of $\bfR$, the algebra
of observables, is denoted by $\bfA$. It is generated by the
resolvents, cf.\ \cite{BuYn}, 
\be
R_\mu(f) \coloneqq (\mu + a^*(f)a(f))^{-1} \, , \quad \mu > 0, \,
f \in \cS(\RR^s) \, .  
\ee
The algebra $\bfA$ is faithfully represented on $\cF$ and leaves all
$n$-particle spaces $\cF_n$ invariant. Its representations
$\rho_n$ on $\cF_n$, however, have non-trivial kernels, $n \in \NN_0$.
It implies that the algebra $\bfA$ contains non-trivial two-sided 
ideals. This feature is a necessary prerequisite for describing dynamics
involving interactions, cf. \cite{BuGr}, and it also matters in the present
discussion. 

\medskip 
In order to incorporate the action of dynamics,
the algebra $\bfA$ has to be extended.
This extension is based on the observation that there
exist surjective *-homomorphisms mapping the algebras
$\rho_{n+1}(\bfA)$ onto $\rho_n(\bfA)$, $n \in \NN_0$, cf.\ \cite{Bu1}.
The corresponding
inverse limit of this family of algebras is
again a C*- algebra~$\obfA$,
the (uniformly bounded) projective limit of
the coherent sequence of algebras $\rho_n(\bfA)$, $n \in \NN_0$.
In the present investigation we make use of the fact that the elements  
of $\obfA$ are all bounded operators
on $\cF$ whose restrictions to $\sum_{m = 0}^n \cF_m$, $n \in \NN_0$,
coincide with $\oplus_{m=0}^n \rho_m(A)$ for some $A \in \bfA$;  
the operator $A$ depends in general on~$n$, however.

\medskip 
Special
elements of $\obfA$ are the resolvents $(\mu + a^*(f)a(f))^{-1}$ for
arbitrary $f \in L^2(\RR^s)$, $\mu > 0$. Even if they are not
contained in $\bfA$, their representations on $\cF_n$ are contained
in $\rho_n(\bfA)$. In fact, there exist projections (surjective
\mbox{*-homomorphisms})
mapping $\obfA$ onto $\rho_n(\bfA)$ \cite[Sect.\ 1]{Ph}; thus the elements
$\oA \in \obfA$ correspond to coherent sequences $\oA_n \in \rho_n(\bfA)$,
$n \in \NN_0$. By some slight abuse of notation, we denote these projections
also by $\rho_n$ and in this way extend the representations of $\bfA$
to representations of $\obfA$ on $\cF_n$. Similarly,
we retain the notation $\omega_n$  for $n$-particle
states extended from
$\bfA$ to $\obfA$, $n \in \NN_0$. The operator norm on $\cF_n$ is denoted
by $\| \, \cdot \, \|_n$. Since the representations $\rho_n(\oA)$ of 
$\oA \in \obfA$   coincide with the restriction of $\oA$ to $\cF_n$, one has
$\| \rho_n(\oA) \|_n = \| \oA \|_n$, $n \in \NN_0$. 

\medskip
Within this setting we consider Hamiltonians on $\cF$ of the form
\begin{align}  \label{e.1.2}
  \biH_L \coloneqq \int \! d\bix \, (\nabla a^*(\bix) \nabla a(\bix) +
  L^{-4} \, \bix^2 a^*(\bix) a(\bix) ) \nonumber \\
  + \int \! d\bix \, d\biy \, a^*(\bix) a^*(\biy) V(\bix - \biy) a(\bix)
  a(\biy) \, , 
\end{align}  
where $L > 0$ and $V \in C_0(\RR^s)$ is a continuous two-body potential
vanishing at infinity. We restrict our attention here to this regular
family. But unbounded and singular potentials could be 
admitted; examples are given in \cite[Sec.\ 6]{Bu3}.
The Hamiltonians $\biH_L$ are gauge invariant and their restrictions
to $\cF_n$ coincide with the $n$-particle Hamiltonians, $n \in \NN$,
\be \label{e.1.3}
\biH_{L,n} \coloneqq \sum_{k=1}^n (\biP_k^2 + L^{-4} \, \biQ_k^2)
+ \sum_{l \neq m,1}^n V(\biQ_l - \biQ_m) \, ,
\ee
where $\biQ_k, \biP_k$ are pairs of canonically conjugate position and
momentum operators, \mbox{$k = 1, \dots , n$}. In the
thermodynamic limit, where~$L$ tends to infinity, the Hamiltonians $\biH_L$ 
approach in the strong resolvent sense
the spatially homogeneous
Hamiltonian without harmonic potential, which is  denoted by $\biH$.

\medskip
Since the Hamiltonians are selfadjoint, one can proceed to the
corresponding unitary
time translations that define automorphisms
$\balpha_L(t) \coloneqq \Adl$, respectively
$\balpha(t) \coloneqq \Ad$, $t \in \RR$, of the algebra of
bounded operators $\cB(\cF)$ on Fock space $\cF$. In
  what follows, the homogeneous dynamics $\balpha$ is 
  identified with $\balpha_L$ for $L = \infty$.
In general, these automorphisms do not leave the subalgebra
$\bfA \subset \cB(\cF)$ invariant.
But, as has been shown in \cite{Bu1}, the
represented algebras $\rho_n(\bfA)$ on $\cF_n$ are 
invariant under their action, $n \in \NN_0$. Moreover, the
*-homomorphisms relating these algebras intertwine this action. It
follows that $\balpha(t)(\bfA) \subset \obfA$, $t \in \RR$,  
and one has 
\be
\rho_n(\alpha_L(t)(A)) = \mbox{Ad} \, e^{it\sbiH_{L,n}} \rho_n(A) \, , \quad
t \in \RR \, , \ A \in \bfA \, ,
\ee
\ie the representations are covariant. 
The C*-algebra generated by the operators $\balpha_L(t)(A)$ with
$t \in \RR$, $A \in \bfA$, is denoted by $\bfAal$. It is the
natural extension of the kinematic algebra $\bfA$ to a dynamical
algebra on which the dynamics $\balpha_L$ acts. 

\medskip
The time translations act pointwise norm
continuously on the represented algebras
$\rho_n(\bfAal)$, $n \in \NN_0$, but this is
not so on $\bfAal$. For given dynamics, this
property can be achieved by regularizing the action of the
automorphisms~$\balpha_L(\RR)$ on
$\bfAal$. It is accomplished by integration of the time
translated operators with test functions. As will be shown, the
result is a subalgebra $\bfCal \subset \obfA$  on
which the respective automorphisms $\balpha_L(\RR)$ act
in the desired norm continuous manner. It yields for any given dynamics a 
C*-dynamical system which describes finite and infinite numbers
of Bosons. 

\medskip
In the subsequent section, we focus on stationary (time
translation invariant) states
of the C*-dynamical systems $(\bfCal, \balpha_L)$.
We will show that all states in the
corresponding GNS representations can continuously be extended to the
algebra~$\bfAal$. Hence, no information
about the underlying observables is lost 
by the regularization. Basic examples of stationary states are
the canonical Gibbs ensembles of $n$ particles at inverse temperature
$\beta > 0$. As we shall see, these ensembles have,
for large $n$, limits which satisfy the Kubo-Martin-Schwinger (KMS) condition.
It is the defining property of equilibrium for infinite systems 
\mbox{\cite[Ch.\ 5.1]{Ha}}. Similarly, the limits of ground states are 
ground states. Analogous results hold for grand canonical ensembles, 
but they are not discussed here.

\medskip 
These results hold
independently of the type of interaction. This may be surprising 
since in case of attractive forces the ensembles are expected to
collapse, which seems to be in conflict with equilibrium. The explanation 
is based on  the fact that observables disappear, which become non-nonsensical
in the limit, \ie they move into the kernel of the
resulting representation. Since the kernel of a representation is an
ideal of the underlying algebra, it shows that the non-trivial
ideal structure of $\bfA$ matters also here. 

\medskip
In case of repulsive forces or two-body potentials of positive type, there
appear in general Bose-Einstein condensates, \ie clusters of bosons
being in the same single particle state. That state
may be infinitely
occupied in the limit, \ie form a proper condensate \cite{Bu4}. The
particle density is then not finite, so the limit ensemble is
not locally normal with regard to the Fock representation.
This local normality is often taken as a selection criterion for
states of physical interest and is enforced by rescaling the
external potential, cf.\ \cite[Ch.\ 6]{LiSeSoYn}.
Here, the present algebraic point of view 
provides a more refined picture.

\medskip
The appearance of proper condensates is virtually inevitable if the
number of particles is increased arbitrarily while keeping the
external potential fixed. 
As we shall see, this fact implies
that there exists a two-sided ideal $\bfI \subset \bfAal$
which lies in the kernel of the limit representation, so the
corresponding observables disappear. Only the   
observables in the quotient $\bfAal / \bfI$ can be
tested in the limit state. Hence excitations of the
condensate are accessible to observations, but not its individual 
members. This is 
reminiscent of the Dirac sea, which is filled with an infinite
number of fermions, the existence of which can only be detected
by their excitations. It suggests that ground states,  
containing a proper condensate of bosons, may be interpreted   
as a confined vacuum. We will illustrate this idea by a
simple non-interacting example. 

\medskip
Another important idealization is the thermodynamic limit, 
which, in the present setting, amounts to sending the
confinement length $L$ of the
harmonic potential to infinity. So one changes the dynamics.
It has been shown in 
\cite[App.]{Bu1} that the automorphisms $\balpha_L(t)$
of $\bfA$ converge in the limit of large $L$ pointwise in norm
to $\balpha(t)$, $t \in \RR$, in all representations $\rho_n$, $n \in \NN_0$.
We present here an improved version of this result. It 
implies that for any 
particle number $n$ one can choose a length $L_n$
such that the canonical ensembles 
of the dynamical systems $(\bfC_{\, \balpha_{L_n}}, \balpha_{L_n})$ have 
limit points for large $n$ that are invariant under the homogeneous
dynamics $\balpha$. This holds provided the
confinement lengths approach infinity sufficiently rapidly,
\ie the harmonic potential becomes flat fast enough.

\medskip
In order to explore the properties of the limit states, such as 
the particle density, one needs some more detailed information about
the dependence of $L_n$ on $n$. Instead of trying to improve the estimates
in \cite{Bu1}, we consider here the generators of the dynamics,
\ie the corresponding derivations that are  
given by the action of the commutators of the Hamiltonians 
with the elements of their domains in the algebras
$\rho_n(\bfA)$, $n \in \NN_0$. 
Within the present setting, the
basic resolvents form a natural domain for the derivations.
Yet in order to ensure that their images are contained in the 
algebras, one must regularize the interaction. This is accomplished in a manner
which has only a minor impact on the dynamics and may be of more
general interest. Namely,
the regularized and non-regularized unitary time translations can be
chosen to be arbitrarily close in norm for any given compact time
interval. It justifies to consider the \mbox{regularized}
derivations. With this input, we determine the dependence $n \mapsto L_n$
which implies that the limit states lie in the kernel of the limit
derivation, generating the homogeneous dynamics.
This is a meaningful test for the stationarity of the limit states.
We find that this condition is satisfied whenever $n / L_n^8 $
approaches $0$ in the limit of large~$n$. It applies to 
all types of interactions and encompasses the condition
$n / L_n^6 = \text{const.}$ that is frequently used in case of 
harmonically trapped Bosons~\cite{LiSeSoYn}. 

\medskip
Our article is organized as follows. In Section 2, we outline
the construction of C*-dynamical systems and
discuss the properties of confined canonical ensembles that 
arise in the limiting case of infinite particle numbers.
Section 3 contains a discussion of the thermodynamic limit of
the ensembles, in particular  the relationship between the
number of bosons and the confinement length 
that leads to stationary limit states. The article
closes with brief conclusions. In the appendix, the proofs
of two lemmas are given. 

\section{Limit states of confined ensembles}
\label{sec2}
\setcounter{equation}{0} 

In this section we study the properties of confined canonical ensembles,
including ground states, in the limit of large particle numbers. Thus we
keep the dynamics~$\balpha_L$ fixed and consider states with an
increasing number of particles on the algebra~$\bfAal$, on which the dynamics
acts. This action is not pointwise continuous in the norm
topology, which is a
desirable feature in the analysis of the limits. We therefore
regularize the
operators: Given any operator $A \in \bfAal$ and any test function
$\varphi \in \cS(\RR)$, we define
\be \label{e.2.1} 
\balpha_L(\varphi)(A) \coloneqq \int \! ds \, \varphi(s)
\balpha_L(s)(A) \, .
\ee
Since the automorphisms are defined by the adjoint action of a continuous
unitary representation of $\RR$, the integral exists in the strong operator
topology on Fock space $\cF$. Transferring the action of the 
dynamics to $\varphi$ by a change of variables, one finds that the function 
$t \mapsto \balpha_L(t)\big(\balpha_L(\varphi)(A)\big)$ is smooth 
in the norm topology. These smooth elements generate a C*-algebra
$\bfCal$ on which the dynamics acts pointwise norm
continuously. It follows from the subsequent lemma that $\bfCal$ 
is a subalgebra of $\obfA$.
\begin{lemma} Let $A \in \bfAal$ and $\varphi \in \cS(\RR)$. 
  The regularized operator $\balpha_L(\varphi)(A)$ is an element of $\obfA$.
\end{lemma}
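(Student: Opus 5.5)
The plan is to use the characterization of $\obfA$ recalled in the introduction. A bounded operator $\oA$ on $\cF$ lies in $\obfA$ iff two things hold. First, it leaves every $\cF_n$ invariant. Second, for each $n \in \NN_0$ there is some $A^{(n)} \in \bfA$ such that the restriction of $\oA$ to $\sum_{m=0}^n \cF_m$ equals $\oplus_{m=0}^n \rho_m(A^{(n)})$. Equivalently, the sequence of restrictions $\rho_m(\oA)$ must be coherent, lying in $\rho_m(\bfA)$ and compatible with the surjective homomorphisms $\rho_{m+1}(\bfA) \to \rho_m(\bfA)$. So it suffices to check these two properties for $B \coloneqq \balpha_L(\varphi)(A)$.

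\textbf{Step 1: reduction to generators.} Since $\obfA$ is a C*-algebra and $A \mapsto \balpha_L(\varphi)(A)$ is linear and norm-bounded by $\|\varphi\|_1 \|A\|$, it is enough to treat $A$ in a norm-dense subset of $\bfAal$. Concretely, one first notes that $\bfAal \subset \obfA$: each $\balpha_L(t)(A')$ with $A' \in \bfA$ lies in $\obfA$ by the results of \cite{Bu1} recalled above, and $\obfA$ is a C*-algebra. It then remains to show that the $\cS(\RR)$-smearing of an element of $\obfA$ along the strongly continuous group $\balpha_L$ stays in $\obfA$.

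\textbf{Step 2: invariance and restrictions.} $H_L$ commutes with $N$, so $e^{isH_L}$ leaves each $\cF_m$ invariant, and the strong integral does too. Fix $n$ and set $K_n \coloneqq \sum_{m\le n}\cF_m$. The restriction of $B$ to $K_n$ is then
\[
\oplus_{m\le n}\int ds\,\varphi(s)\,\mbox{Ad}\,e^{is\sbiH_{L,m}}(\rho_m(A)).
\]
The algebras $\rho_m(\bfA)$ are invariant under $\mbox{Ad}\,e^{is\sbiH_{L,m}}$ by \cite{Bu1}, and the homomorphisms between them intertwine this action. So $\oplus_{m\le n}\rho_m(\bfA) \cong \rho_n(\bfA)$, realized on $K_n$ via the coherent structure, is a C*-algebra invariant under the restricted dynamics. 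Moreover, by \cite{Bu1} the time translations act pointwise norm continuously on each $\rho_m(\bfAal)$, hence on this finite direct sum. The integral therefore converges in norm on $K_n$ as a Bochner integral of a norm-continuous, integrable function with values in a closed subalgebra. Its value lies in that closed subalgebra, i.e.\ it equals $\oplus_{m\le n}\rho_m(A^{(n)})$ for some $A^{(n)} \in \bfA$. This is exactly the defining property of $\obfA$, and it also yields coherence automatically.

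\textbf{Main obstacle.} The delicate point is the norm continuity of $s \mapsto \mbox{Ad}\,e^{is\sbiH_{L,m}}(\rho_m(A))$ on each fixed $\cF_m$. Without it the integral is only strong, and strong limits need not remain in $\rho_m(\bfA)$. I would take this from the statement that the time translations act pointwise norm continuously on $\rho_n(\bfAal)$. If a direct argument is needed, I would reduce to resolvents $R_\mu(f)$: on $\cF_m$ these are finite sums of compact-in-one-variable operators, and their dynamics is controlled via Duhamel estimates as in \cite{Bu1}. A further point to check is that the identification of the restriction to $K_n$ with $\rho_n(\bfA)$ is isometric, so that norm closedness transfers; this holds because the maps $\rho_{m+1}(\bfA)\to\rho_m(\bfA)$ are *-homomorphisms.
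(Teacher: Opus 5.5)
Your proposal is correct and follows essentially the paper's proof. The paper approximates $\balpha_L(\varphi)(A)$ by Riemann sums in $\bfAal \subset \obfA$, and uses the norm continuity of the dynamics in each representation $\rho_n$ to get norm convergence to an element of the closed algebra $\rho_n(\bfA)$; your Bochner-integral argument does the same thing. By working on $\sum_{m\le n}\cF_m$ with the closed diagonal image $(\oplus_{m\le n}\rho_m)(\bfA)$, you also make explicit the coherence of the restrictions, which the paper leaves implicit.
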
  
\begin{proof}
  The bounded operator $\balpha_L(\varphi)(A)$ is defined on $\cF$
  as a Riemann integral
  in the strong operator topology. It can be approximated in this
  topology by
  a sequence of Riemann sums $S_k \in \bfAal$ involving the
  time translated operator \mbox{$A \in \bfA$}, $k \in \NN$. These sums
  are elements of $\obfA$. 
  The representatives $\rho_n(S_k)$ of these sums converge on $\cF_n$ 
  in norm to $\rho_n(\balpha_L(\varphi)(A)) \in \rho_n(\bfA)$ for large
  $k$,  
  because the dynamics $\balpha_L(\RR)$ acts norm continuously 
  on $A$ within this representation. This applies to all
  representations $\rho_n$, $n \in \NN_0$, proving that 
  \mbox{$\balpha_L(\varphi)(A) \in \obfA$}. 
 \end{proof}  

The regularization simplifies the analysis of states on $\bfCal$
that are invariant under the adjoint action of the dynamics $\balpha_L$.
To study their properties, 
we need to consider $k$-fold products of time translated operators
in $\bfAal$ which are jointly integrated with test functions
in $\cS(\RR^k)$, $k \in \NN$. As a consequence of the kernel theorem for
Schwartz test functions, the resulting operators are contained
in the algebra $\bfCal$, which is generated by sums of products of individually
regularized operators. However, by the regularization some relevant observables
get lost. Examples are the basic resolvents, which 
determine the number of particles in a particular one-particle state.
On them the time translations do not act norm
continuously. Yet these observables can be recovered in the
GNS representations of the regularized algebra, as is shown next.
\begin{lemma}  \label{l.2.2}
  Let $\omega$ be a state on $\bfCal$
  which is invariant under the adjoint action
  of $\balpha_L(\RR)$, and let $(\pi,\cH,\Omega)$ be its GNS representation.
  This representation is covariant, \ie there
  exists a continuous unitary representation $U_\pi$ of $\RR$ such that
  \be
  \text{Ad} \, U_\pi(t)(\pi(C))  = \pi(\balpha_L(t)(C)) \, , \quad
  t \in \RR \, , \, C \in \bfCal \, . 
  \ee
  Moreover, $\pi$ extends weakly continuously to
  $\bfAal$ as a
  positive, unit-preserving, and covariant bi-module map $\overline{\pi}$.
  That is,  given  $C',C''  \in \bfCal$ and $A \in \bfAal$, one has 
  $\overline{\pi}(C' \, A C'') = \pi(C') \overline{\pi}(A)
  \pi(C'')$.
\end{lemma}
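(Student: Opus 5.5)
Covariance is standard and comes first. Since $\omega$ is invariant under $\balpha_L(\RR)$, define $U_\pi(t)\pi(C)\Omega \coloneqq \pi(\balpha_L(t)(C))\Omega$ for $C \in \bfCal$. Invariance of $\omega$ makes this isometric on the dense set $\pi(\bfCal)\Omega$, so it extends to a unitary group satisfying the stated relation. Continuity follows from the pointwise norm continuity of $\balpha_L$ on $\bfCal$: the map $t \mapsto \pi(\balpha_L(t)(C))\Omega$ is norm continuous, hence $U_\pi$ is strongly continuous.

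For the extension, I would use an approximate unit made of regularized elements. Pick $\varphi_k \in \cS(\RR)$ with $\int \varphi_k = 1$ and $\varphi_k \to \delta$, e.g.\ $\varphi_k(s) = k\varphi(ks)$ with $\varphi \geq 0$, or better with $\widetilde{\varphi}$ of compact support equal to $1$ near $0$. For $A \in \bfAal$ set $A_k \coloneqq \balpha_L(\varphi_k)(A) \in \bfCal$, using the preceding lemma and the definition of $\bfCal$. The key observation is that $\pi(A_k)$ can be written through $U_\pi$:
\[
\pi(A_k) = \int ds\, \varphi_k(s)\, U_\pi(s)\, \pi(\balpha_L(\varphi_k')(A)) \ldots
\]
This form is awkward, so I would instead define $\overline{\pi}(A)$ via matrix elements. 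For $C', C'' \in \bfCal$, put
\[
\langle \pi(C'^*)\Omega, \overline{\pi}(A)\, \pi(C'')\Omega\rangle \coloneqq \omega(C' A C''),
\]
where $\omega(C'AC'')$ must first be made sense of. The product $C'AC''$ is \emph{not} in $\bfCal$. However, $\omega$ is a state on $\bfCal \subset \obfA$, and I would extend it to a state $\bar\omega$ on the C*-algebra generated by $\bfAal$ and $\bfCal$ via Hahn--Banach. The resulting map would not be unique, so uniqueness has to be forced by the bound and the limit below.

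The better route is the limit $\overline{\pi}(A) \coloneqq \text{w-}\lim_k \pi(A_k)$. Boundedness gives $\|\pi(A_k)\| \leq \|\varphi_k\|_1 \|A\|$, uniformly bounded if $\varphi_k \geq 0$. Existence of the limit on the dense set reduces to showing that $\omega(C' A_k C'')$ converges. Here the regularity of $C', C''$ is what makes this work. Write $C'' = \balpha_L(\psi)(B)$. Then $A_k C''$ is a jointly regularized product, and using the invariance $\omega = \omega\circ\balpha_L(s)$ one can shift the smearing from $A$ onto the neighbouring smooth factors. This makes $\omega(C'A_kC'')$ an integral $\int ds\,\varphi_k(s)\, F(s)$ with $F$ continuous, which converges to $F(0)$. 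I expect this step to be the main obstacle: justifying that $s \mapsto \omega(C'\balpha_L(s)(A)C'')$ is continuous (and even well defined) although $A \notin \bfCal$ and $\balpha_L$ does not act norm continuously on $A$. The plan is to view $C'\balpha_L(s)(A)C''$ as an element of $\obfA$ and use the $\rho_n$-picture, where the action is norm continuous on each $\cF_n$. If $\omega$ restricted to the regularized algebra is not concentrated on finitely many $n$, a uniformity argument is needed, and for that I would exploit that $\balpha_L(t)$ acts norm continuously on $C'$ and $C''$ to transfer the time dependence.

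Once the weak limit exists and is independent of the choice of $\varphi_k$, the remaining properties follow. Positivity holds because each $\pi(A_k)$ is positive when $A \geq 0$ and $\varphi_k \geq 0$. Unit preservation is immediate since $1_k = 1$. Covariance comes from $\balpha_L(t)(A_k) = (\balpha_L(t)A)_k$. The bimodule property follows from $\balpha_L(\varphi_k)(C'AC'') - C'A_kC'' \to 0$ in the relevant sense, by the norm continuity of $\balpha_L$ on $C'$ and $C''$. Weak continuity is inherited from the construction. Finally, on $\bfCal$ itself the extension agrees with $\pi$, since $A_k \to A$ in norm there.
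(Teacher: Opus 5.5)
Your covariance argument is fine, as is your reduction of weak convergence of the bounded sequence $\pi(A_k)$ to convergence of $\omega(C' A_k C'')$ on the dense set $\pi(\bfCal)\Omega$. Your final checks of positivity, unit preservation, covariance and the bimodule property also work. But the central step, that the limit exists, is left open, and the route you sketch would not close it.

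The function $F(s) = \omega(C'\balpha_L(s)(A)C'')$ is not defined, since $C'\balpha_L(s)(A)C'' \notin \bfCal$. The $\rho_n$-picture cannot supply it. $\omega$ is an arbitrary invariant state on $\bfCal$, typically a weak-* limit of $n$-particle states with $n \to \infty$, and it has no normality relation to any $\rho_n$. Moreover, since $\|X\| = \sup_n \|X\|_n$ on $\obfA$, the norm continuity of the dynamics on each $\cF_n$ is not uniform in $n$. That non-uniformity is exactly why $\balpha_L$ fails to act norm continuously on $\bfAal$.

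Transferring the time dependence onto $C', C''$ via invariance presupposes that $\omega$ is already extended, invariantly, to elements like $C'AC''$, so it is circular. Even an invariant extension, e.g.\ a time-averaged Hahn--Banach extension, could not simply be interchanged with the strong-operator integral defining $A_k$, because such an extension need not be normal on $\cF$.

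The missing idea is to avoid evaluating $\omega$ on unregularized elements altogether. Instead, move the smearing onto $U_\pi$, which acts weakly continuously on all of $\cB(\cH)$.

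The bounded sequence $\pi(A_k)$ has weak limit points $X$. For every $\varphi \in \cS(\RR)$ one has $\int ds\, \varphi(s)\, \text{Ad}\, U_\pi(s)(\pi(A_k)) = \pi(\balpha_L(\varphi_k)(\balpha_L(\varphi)(A)))$. Because $\balpha_L(\varphi)(A) \in \bfCal$, the right-hand side converges in norm to $\pi(\balpha_L(\varphi)(A))$.

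The map $Y \mapsto \int ds\, \varphi(s)\, \text{Ad}\, U_\pi(s)(Y)$ is weakly continuous on bounded sets. Hence every limit point satisfies $\int ds\, \varphi(s)\, \text{Ad}\, U_\pi(s)(X) = \pi(\balpha_L(\varphi)(A))$ for all $\varphi$. Letting $\varphi$ approach the delta function, and using the weak continuity of $s \mapsto \text{Ad}\, U_\pi(s)(X)$, shows that $X$ is uniquely determined. So the whole sequence converges weakly, independently of the chosen $\varphi_k$.

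This is the paper's argument, and your abandoned ``key observation'' formula was heading in this direction. With it in place, and with $\overline{\pi}$ defined directly on the products $C'AC''$ by the same limit, the rest of your proof goes through.
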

\noindent {\bf Remark:} It follows from this result that all 
states in the GNS-representation of $\bfCal$ can be extended weakly
continuously to states on $\bfAal$.
\begin{proof}
  The existence of $U_\pi$ follows from standard arguments,
  cf.\ \cite[Ch.\ II.3.2]{Ha}: one puts
  \be
  U_\pi(t)\, \pi(C) \Omega \coloneqq \pi(\balpha_L(t)(C)) \Omega \, , \quad
  t \in \RR \, , \ C \in \  \bfCal \, .
  \ee
  The unitarity of $U_\pi$ 
  results from the invariance of $\omega$ under the adjoint 
  action of $\balpha_L(\RR)$ and its continuity from
  the pointwise norm continuous
  action of $\balpha_L(\RR)$ on $\bfCal$.

  \medskip
  For the proof that $\pi$ extends
  to $\bfAal$, one makes use of the fact that 
  $\bfCal$ contains the operators
  $\balpha_L(\varphi)(C' A C'')$, where 
  $\varphi \in \cS(\RR)$, $C', C'' \in \bfCal$, and $A \in \bfAal$. 
  Since $\bfCal$ is generated by regularized elements of $\bfAal$,
  this statement follows from the kernel theorem for Schwartz
  test functions. We use the
  abbreviated notation $\oA \coloneqq C' A C''$. 
  Let $\delta_k$, $k \in \NN$,
  be a sequence of positive test functions with integral $1$
  that approaches the
  Dirac delta-function. For given $\oA$, the sequence
  $\balpha_L(\delta_k)(\oA) \in \bfCal$, $k \in \NN$,
  is uniformly bounded. Hence there
  are sub-sequences $\balpha_L(\delta_i)(\oA)$, $i \in \II \subset \NN$, 
  such that $i \mapsto \pi(\balpha_L(\delta_i)(\oA))$ converges in the
  weak operator topology to some operator $A_\II \in \cB(\cH)$.
  Let $\varphi \in \cS(\RR)$ be a test function. Using 
  the fact that $U_\pi$ is a unitary representation of
  $\RR$, one obtains for $i \in \II$ 
  \be
  \int \! ds \, \varphi(s) \text{Ad} \, U_\pi(s) (\pi(\balpha_L(\delta_i)(\oA)))
  = \int \! ds \, \delta_i(s) \text{Ad} \, U_\pi(s)
  (\pi(\balpha_L(\varphi)(\oA))) \, .
  \ee
  Because of the norm continuity of the action of
  $s \mapsto \text{Ad} \, U_\pi(s)$
  on $\pi(\bfCal)$, the right hand side of this equality converges in
  norm to $\pi(\balpha_L(\varphi)(\oA))$ for any choice of
  the sub-sequence $\II$. Since the map
  $\int \! ds \, \varphi(s) \text{Ad} \,
  U_\pi(s) \, : \cB(\cH) \rightarrow  \cB(\cH)$
  is \mbox{$\pi$-normal}, the left hand side converges
  for  $i \in \II$ weakly to
  $\int \! ds \, \varphi(s) \text{Ad} \, U_\pi(s) (A_{\II})$. So one has
  \be
  \int \! ds \, \varphi(s) \text{Ad} \, U_\pi(s)(A_{\II}) =
  \pi(\balpha_L(\varphi)(\oA)) \, .
  \ee
  Let $A_\JJ$ be the weak limit of another sequence
  $\pi(\balpha_L(\delta_j)(\oA))$, $j \in \JJ \subset \NN$. It follows
  that for all $\varphi \in \cS(\RR)$
  \be
  \int \! ds \, \varphi(s) \text{Ad} \,  U_\pi(s)(A_{\II})
  =  \pi(\balpha_L(\varphi)(\oA))
  = \int \! ds \, \varphi(s) \text{Ad} \,  U_\pi(s)(A_{\JJ}) \, .
  \ee
  Because of the weakly continuous action of
  $s \mapsto \text{Ad} \, U_\pi(s)$ on $\cB(\cH)$,
  it implies $A_\II = A_\JJ$. Thus, all weak limit
  points of the sequence $k \mapsto \pi(\balpha_0(\delta_k)(\oA))$
  coincide, \ie the sequence converges weakly. Recalling that
  $\oA =  C'AC''$, we put
  \be \label{2.7}
  \overline{\pi}(C' A C'') \coloneqq
  \lim_k \pi(\balpha_L(\delta_k)(C' A C'')) \, ,
  \quad C', C'' \in \bfCal \, , \, A \in \bfAal \, .
  \ee
  It is obvious that $\overline{\pi}$ is linear, unit preserving, and it 
  maps positive operators to positive ones. It is also clear from its
  definition that it is covariant.

  \medskip
  For the proof that $\overline{\pi}$ is a bi-module map, we use of
  the fact that $s \mapsto \balpha_L(s)(C')$ is norm continuous (similarly for
  $C''$). Denoting the operator norm on $\obfA$
  by $\| \, \cdot \, \|$, one has
   \begin{align}
     &     \| \balpha_L(\delta_k)(C' A C'') - C \, \balpha_L(\delta_k)(A) C'' \|
     \nonumber \\
&     = \| \int \! ds \, \delta_k(s) \big(
     \balpha_L(s)(C' A C'') - C' \, \balpha_L(s)(A) C'' \big) \|
     \nonumber \\
& \leq \int \! ds \, \delta_k(s) \| (\balpha_L(s)(C') - C') \| \,
     \| A \| \, \| C'' \|
     \nonumber \\
&     + \int \! ds \, \delta_k(s) \| (\balpha_L(s)(C'') - C'') \| \,
   \| C' \| \, \| A \| \, .
   \end{align}  
   This upper bound tends to $0$ in the limit of large $k$. Hence, in the
   sense of weak convergence,
   \begin{align}
  \overline{\pi}(C' A C'') & =  \lim_k \pi(\balpha_L(\delta_k)(C' A C'')) 
  =
  \pi(C')  \lim_k \pi(\balpha_L(\delta_k)(A)) \, \pi(C'') \nonumber \\
  & =
      \pi(C') \, \overline{\pi}(A) \, \pi(C'') \, ,
   \end{align}  
completing the proof. \end{proof}

We turn now to the discussion of the limit
states of canonical ensembles for the dynamics $\balpha_L$. Given the
inverse temperature $\beta \in \RR_+ \cup \{ \infty \}$ and
the number of particles $n \in \NN_0$, 
we consider the states formed by Gibbs-von Neumann ensembles,  
\be
\omega_{\beta,n}(A) \, \coloneqq \, \text{Tr}_{\cF_n\,}
e^{- \beta \sbiH_{L,n}} A \, / \,
\text{Tr}_{\cF_n \,}  e^{- \beta \sbiH_{L,n}} \, ,
\quad A \in \bfCal \, ,
\ee
where for $\beta = \infty$ these are the ground states
of the Hamiltonians. 
The gauge invariant operators $e^{- \beta \sbiH_L}$ and $A$ 
are restricted here to the $n$-particle
spaces $\cF_n$, where the corresponding traces are taken. 
Because of the confining potential and the boundedness of the interaction
potentials on $\cF_n$, $n \in \NN_0$, the ground states are unique
and the traces exist. However, the
right hand side of this equality is ill defined if $n$
tends to infinity. One then uses the fact that equilibrium
states satisfy the KMS condition, which is the characteristic 
property of equilibrium for finite and infinite systems
\cite[Sec.\ V.1]{Ha}.

\medskip \noindent
\textbf{Definition:} Let $\omega_\beta$ be a state on $\bfCal$.
It satisfies
the KMS condition at inverse temperature $\beta > 0$ for the dynamics
$\balpha_L$ if for all $ A,B \in \bfCal$ the
correlation functions 
\be
t \mapsto \omega_\beta(A \balpha_L(t)(B)) \, , 
\ee
can continuously be extended to the strip
$S_\beta \coloneqq \{ z \in \CC : 0 \leq \text{Im} \, z \leq \beta \}$,
are analytic in its interior, and have at its upper rim
$\text{Im} \, z = \beta$ the boundary value 
\be
t \mapsto \omega_\beta(\balpha_L(t)(B) A) \, .
\ee
If $\beta = \infty$, this boundary condition is replaced
by the condition that the analytic continuations are uniformly 
bounded on the whole upper half plane. 

\bigskip
After these preparations we can draw on general results for
sequences of KMS states
on C*-dynamic systems. We make use of the fact that for any
sequence of states on a
C*-algebra there exist states which are limits
of the sequence in the weak-*-topology. That is, for any
given finite set of operators in the algebra there exists a
sub-sequence of states such that the corresponding expectation values 
converge to the expectation values in the limit state.
Here, we deal with sequences of normal states $\omega_n$ on the algebra
$\bfA$ with respect to the representations $\rho_n$, $n \in \NN_0$.
As explained, they can be extended to states on the C*-algebra $\obfA$.
This is why their w-*-limit points also define states on that algebra. 
We can apply now \cite[Prop.\ 5.3.23]{BrRoII} to the systems at hand. 
\begin{theorem} \label{t.2.3} 
  Let the dynamics $\balpha_L$ and the
  inverse temperature \mbox{$\beta \in \RR_+ \cup \infty$} be given and 
  let $\omega_{\beta,n}$ be the sequence of states on $\obfA$
  describing the canonical Gibbs-von Neumann ensembles for the
  particle numbers $n \in \NN_0$. 
  Every weak-*-limit point of this
  sequence is a KMS state, respectively ground state,
  on $\bfCal \subset \obfA$ for the 
  inverse temperature~$\beta$. 
\end{theorem}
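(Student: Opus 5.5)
The plan is to reduce the statement to the general result \cite[Prop.\ 5.3.23]{BrRoII}, which says that on a C*-dynamical system a weak-*-limit of $(\tau,\beta)$-KMS states (respectively ground states) is again a $(\tau,\beta)$-KMS state (respectively ground state), provided the dynamics is one fixed strongly continuous one-parameter group of automorphisms. So first I will set up the C*-dynamical system $(\bfCal,\balpha_L)$. By the construction preceding Lemma 2.1 and by that lemma, $\bfCal\subset\obfA$ is a C*-algebra on which $\balpha_L(\RR)$ acts pointwise norm continuously. The restrictions of the states $\omega_{\beta,n}$ (extended from $\bfA$ to $\obfA$ as explained before the theorem) are then states on $\bfCal$. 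Any weak-*-limit point of $\omega_{\beta,n}$ on $\obfA$ restricts to a weak-*-limit point of the restricted sequence on $\bfCal$, so it suffices to argue on $\bfCal$.

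The second step is to verify that each $\omega_{\beta,n}$ is a KMS state (respectively ground state) on $(\bfCal,\balpha_L)$. I will use the covariance $\rho_n(\balpha_L(t)(C))=\text{Ad}\,e^{it\sbiH_{L,n}}\rho_n(C)$ for $C\in\bfCal$, which follows from the stated covariance on $\bfA$ by passing to integrals and norm limits. Then $\omega_{\beta,n}$ is the Gibbs state of the selfadjoint operator $\biH_{L,n}$ on $\cF_n$, composed with $\rho_n$. Because of the harmonic confinement and the bounded interaction on $\cF_n$, $e^{-\beta\sbiH_{L,n}}$ is trace class for $\beta>0$. The standard finite-system computation then gives the KMS condition. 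Concretely, $\text{Tr}\,e^{-\beta H}A\,e^{izH}Be^{-izH}$ is analytic in $0<\text{Im}\,z<\beta$, bounded and continuous on the closed strip, with the required boundary value $\text{Tr}\,e^{-\beta H}\balpha_L(t)(B)A$ at $\text{Im}\,z=\beta$ obtained by cyclicity of the trace. For $\beta=\infty$ the ground state is unique, $\omega_{\infty,n}=\langle\Omega_n|\rho_n(\cdot)\Omega_n\rangle$ with $\biH_{L,n}\Omega_n=E_n\Omega_n$. Then $\omega_{\infty,n}(A\balpha_L(z)(B))=\langle\rho_n(A^*)\Omega_n|e^{iz(\sbiH_{L,n}-E_n)}\rho_n(B)\Omega_n\rangle$ is bounded by $\|A\|\|B\|$ on the upper half plane, since $\biH_{L,n}-E_n\geq0$.

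Finally I will apply \cite[Prop.\ 5.3.23]{BrRoII}. Equivalently, I can use the Fourier-space formulation of the KMS condition: $\omega(A\balpha_L(\hat f)(B))=\omega(\balpha_L(\hat f_\beta)(B)A)$ for $f$ with compactly supported Fourier transform and $f_\beta(p)=f(p+i\beta)$, respectively $\omega(A\balpha_L(\hat f)(B))=0$ for $\text{supp}\,f\subset(-\infty,0)$ in the ground state case. These identities involve only finitely many elements of $\bfCal$, because smeared operators $\balpha_L(\hat f)(B)$ lie in $\bfCal$ by norm continuity of the action. Hence they pass to weak-*-limits directly.

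The main technical point is this last passage. A naive approach through pointwise convergence of the analytic functions fails, because only a uniform bound $\|A\|\|B\|$ on the strip is available. So I must rely on the smeared formulation, or on Vitali/Montel arguments with that uniform bound, which is exactly what \cite[Prop.\ 5.3.23]{BrRoII} provides. It is essential here that the dynamics is the same for all $n$ and acts norm continuously on $\bfCal$; this is why the regularization to $\bfCal$ was introduced.
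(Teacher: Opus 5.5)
Your proposal is correct and follows the same route as the paper, which simply invokes \cite[Prop.\ 5.3.23]{BrRoII} for the C*-dynamical system $(\bfCal,\balpha_L)$, after noting that the Gibbs (respectively ground) states $\omega_{\beta,n}$ extend to $\obfA$ and restrict to $\bfCal$. Your explicit check that each $\omega_{\beta,n}$ is a KMS (respectively ground) state for $\balpha_L$ on $\bfCal$ via the covariance of $\rho_n$, and your use of the smeared form of the KMS condition to pass to weak-*-limits, spell out what the paper leaves implicit.
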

\noindent \textbf{Remarks:} (1) The limit states may not be unique, \ie the
sequence need not converge in general. There can be several limit
points, for example at phase transition points. 

\medskip 
\noindent (2) All normal states in the GNS representation of a limit
state can be extended continuously from $\bfCal$
to the dynamical algebra $\bfAal$, cf.\
Lemma \ref{l.2.2}. The resulting correlation 
functions for different times may not be continuous,
however. 

\medskip
As already mentioned, this theorem applies to all two-body interaction
potentials in $C_0(\RR^s)$, irrespective of the attractive or repulsive
nature of the resulting forces or their range. For the analysis
of the properties of the limit states, such as the
occurrence of Bose–Einstein condensates,
the basic resolvents are a useful tool. The following lemma
establishes one of their key properties.
\begin{lemma} \label{l.2.4}  
  Given $Z \coloneqq (\mu + a^*(f)a(f))^{-1} \in \bfA$ 
  with $\mu > 0$, $f \in \cS(\RR^s)$, let $\blk \bfA \, Z \brk$ 
  and $\blk Z \, \bfA \brk$ be the
  closed left and right ideals in $\bfA$, generated by $Z$. Then 
  $ \blk \bfA \, Z \brk = \blk Z \, \bfA \brk$. 
  Likewise, for the closed left and right ideals
  $\blk Z \, \obfA \brk$, ~$\blk Z \, \obfA \brk$ in $\obfA$,
  generated  by $Z$, it  holds
  that $ \blk \obfA \, Z \brk = \blk Z \, \obfA \brk$ .
\end{lemma}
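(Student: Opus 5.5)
Since $Z$ is a positive element, the left ideal $\blk \bfA \, Z \brk$ is the norm closure of $\{AZ : A \in \bfA\}$. Because $Z^* = Z$, taking adjoints maps it onto the right ideal $\blk Z \, \bfA \brk$. So it suffices to show that every product $AZ$ with $A \in \bfA$ lies in $\blk Z \, \bfA \brk$; the reverse inclusion then follows by taking adjoints. My plan is to show that $Z$ almost commutes with the algebra, modulo elements of the form $Z B$. For this I would use the structure of $\bfA$ as generated by the resolvents $R_\mu(g)$, $g \in \cS(\RR^s)$, $\mu>0$. It is enough to treat products $A = R_{\mu_1}(g_1)\cdots R_{\mu_m}(g_m)$, since their linear span is dense and the ideals are closed.

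The key step is a commutation relation between $Z$ and a generator $R_\nu(g)$. Write $g = c f + h$ with $h \perp f$. I expect an identity of the form
\[
R_\nu(g) \, Z = Z \, B_g + Z \, R_\nu(g) \, K ,
\]
with $B_g, K \in \bfA$; more simply, I want $R_\nu(g) Z = Z \, X$ for some $X$ in the closed right ideal. The relevant input is that $a^*(f)a(f)$ generates gauge-type transformations $e^{is a^*(f)a(f)}$. These rescale the component of $g$ along $f$ by a phase, so they map $\bfA$ into itself (or into $\obfA$). Heuristically, then, $Z$ is a function of an operator whose adjoint action preserves the algebra.

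A cleaner route is functional calculus. $Z$ generates the same closed left and right ideals as $Z^{1/2}$, and also as any $e(Z)$ with $e$ continuous, $e(0)=0$ and $e>0$ on $(0,1/\mu]$. The conjugation $Z \mapsto e^{is a^*(f)a(f)} \, Z \, e^{-is a^*(f)a(f)}$ leaves $Z$ fixed. Averaging $A$ over this one-parameter group,
\[
A \mapsto \lim_{T\to\infty} \frac{1}{2T}\int_{-T}^{T} ds \, e^{is a^*(f)a(f)} A \, e^{-is a^*(f)a(f)} ,
\]
should give a conditional expectation onto the elements commuting with $a^*(f)a(f)$. I would try to establish directly that $[Z, A] \in Z\bfA + \bfA Z$ with a controlled structure, by computing $[Z,R_\nu(g)] = Z\,[R_\nu(g), a^*(f)a(f)]\,Z$. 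This uses the resolvent identity $[Z,Y] = Z[Y,Z^{-1}]Z$, valid since $Z^{-1} = \mu + a^*(f)a(f)$. The commutator $[R_\nu(g), a^*(f)a(f)]$ equals $R_\nu(g)\,[a^*(g)a(g), a^*(f)a(f)]\,R_\nu(g)$. I expect this, after multiplication on both sides by $Z$ and the resolvents, to be a bounded element of $\bfA$: the unbounded factors $a^*(f)$ and $a(g)$ are tamed by $Z^{1/2}$ and $R_\nu(g)^{1/2}$. Consequently $R_\nu(g) Z = Z R_\nu(g) + Z\,Y\,Z$ with $Y$ suitably bounded, hence $R_\nu(g) Z \in \blk Z\,\bfA \brk$. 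Iterating over the factors of $A$ gives $AZ \in \blk Z\,\bfA\brk$.

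The main obstacle is making the term $Z\,[R_\nu(g),a^*(f)a(f)]\,Z$ rigorously an element of $Z \bfA$ rather than merely of $Z \obfA$ or $\cB(\cF)$. One must show that expressions like $Z^{1/2} a^*(f) a(g) R_\nu(g)^{1/2}$ are bounded operators belonging to $\bfA$. I would try to handle this via the known representation of such operators as norm limits of elements of $\bfA$, as in \cite{Bu1,BuYn}, possibly working representation-wise in $\rho_n(\bfA)$ with uniform bounds in $n$. For the second statement concerning $\obfA$, I would argue levelwise. Elements of $\obfA$ are coherent sequences in $\rho_n(\bfA)$, and the identity $\rho_n(\oA)\rho_n(Z) \in \rho_n(Z)\rho_n(\bfA)$ (closure) holds in each $n$-particle representation by the first part. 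The obstacle there is uniformity in $n$ of the approximations. I would handle it using that $\rho_n(Z) \to 0$ in the relevant sense only through the factor $Z$, so the approximants can be chosen with norms bounded by $\|\oA\|$ and a constant independent of $n$.
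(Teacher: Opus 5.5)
\paragraph{First statement (for $\bfA$).} Here your route is the paper's. The paper also moves $Z$ through each generator, $Z R_\nu(g) = A_g Z$ with $A_g = R_\nu(g) + Z[R_\nu(g), a^*(f)a(f)]$, and then iterates over products of resolvents. Your identity $R_\nu(g) Z = Z A_g^*$ is just the adjoint of this.

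The step you leave open is settled in three moves:
\begin{itemize}
\item[(i)] $A_g$ is reordered into products such as $Z a^*(f) R_\nu(g) a(g) R_\nu(g)$ and $Z a(f) R_\nu(g) a^*(g) R_\nu(g)$, where each creation or annihilation operator sits next to the resolvent that tames it. Your $Y$ itself is unbounded; only such reordered products are bounded.
\item[(ii)] One shows $R_\nu(g)a^*(g), R_\nu(g)a(g) \in \bfR$ as norm limits of $R_\nu(g)\,\varepsilon^{-1}\big(1 - i(i+\varepsilon(a^*(cg)+a(cg)))^{-1}\big)$.
\item[(iii)] Gauge invariance then puts $A_g$ in $\bfA$.
\end{itemize}
The averaging over $e^{is a^*(f)a(f)}$ and the $Z^{1/2}$ detour are not needed.

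\paragraph{Second statement (for $\obfA$).} Here your plan has a genuine gap, and it cannot be closed because the claim is false.

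Levelwise the assertion is empty. Since $\rho_n(Z) \geq (\mu + n\|f\|^2)^{-1}$, $\rho_n(Z)$ is invertible in $\rho_n(\bfA)$, so both levelwise ideals are all of $\rho_n(\bfA)$. Everything therefore rests on the uniformity in $n$, which you only assert.

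That uniformity fails. Take $h \perp f$ with $\|f\| = \|h\| = 1$, put $g_\pm = (f \pm i h)/\sqrt{2}$, and set $U = e^{i\frac{\pi}{2} a^*(g_+)a(g_+)}\, e^{-i\frac{\pi}{2} a^*(g_-)a(g_-)}$.
\begin{itemize}
\item On $\bigoplus_{m \leq n} \cF_m$ each factor coincides with a polynomial in $R_\mu(g_\pm)$ (interpolate on the finite spectrum), so $U \in \obfA$.
\item $U = \Gamma(u)$ with $uf = -h$ and $uh = f$, hence $U Z U^* = R_\mu(h)$.
\item If $UZ \in \blk Z \obfA \brk$, then $R_\mu(h) \in \blk Z \obfA \brk$.
\item For $\psi_M = (M!)^{-1/2} a^*(f)^M \Omega$ one has $\langle \psi_M, R_\mu(h)\psi_M\rangle = \mu^{-1}$. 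On the other hand, $|\langle \psi_M, ZB\psi_M\rangle| \leq (\mu+M)^{-1}\|B\|$ for every $B \in \obfA$.
\item Letting $M \to \infty$, $R_\mu(h)$ has distance at least $\mu^{-1}$ from $Z\obfA$. Hence $\blk \obfA Z \brk \neq \blk Z \obfA \brk$.
\end{itemize}
Accordingly, the exact levelwise solutions $\rho_n(Z)^{-1}\rho_n(UZ)$ have norm at least $(\mu+n)/\mu$.

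The mechanism is this: $\obfA$ contains unitaries that carry an arbitrarily large $f$-condensate into another mode. Elements of $\bfA$ cannot do this, and that is exactly what keeps $A_g$ bounded in the first part.

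The paper's projective-limit argument for this part fails for the same reason. Bounded coherent sequences taken from the levelwise ideals $\rho_n(\blk \bfA Z \brk) = \rho_n(\bfA)$ exhaust all of $\obfA$, whereas $\blk \obfA Z \brk$ is proper: any weak-* limit of the vector states of $\psi_M$ annihilates it. So only the statement for $\bfA$ is provable. The same $U$ also shows that $P(f)$ in Proposition~\ref{p.2.5} need not commute with $\pi(\obfA)$, taking $\omega$ to be such a limit state.
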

\begin{proof}
  For each resolvent
  $R_\nu(g) \coloneqq (\nu + a^*(g)a(g))^{-1}$ with $\nu > 0$, $g \in \cS(\RR^s)$, 
  there exists an operator $A_g \in \bfA$ such that
  $Z R_\nu(g) = A_g Z$. (Since $Z$ and $R_\nu(g)$ are selfadjoint, the 
  relation $R_\nu(g) Z = Z A_g^*$ then obtains as well.)
  The operator $A_g$ is formally given by
  \be
  A_g = R_\nu(g) + Z \, [R_\nu(g) , Z^{-1} ] = R_\nu(g) + 
  Z \, [R_\nu(g) , a^*(f)a(f) ] \, .
  \ee
  Making use of the canonical commutation relations, one obtains 
  \be \label{e.2.15}
  [ R_\nu(g) , a^*(f) ] 
    =   - \langle g | f \rangle \, R_\nu(g) a^*(g) R_\nu(g)  \, , 
  \ee 
  where $\langle \cdot \, | \, \cdot \rangle$ denotes the scalar product on
  $L^2(\RR^s)$, and taking the adjoint a similar relation obtains for the
  commutator with $a(f)$.
  The operators $R_\nu(g) a^*(g)$ and $ R_\nu(g) a(g)$ are elements of
  the resolvent algebra $\bfR$. This is a consequence of the fact that
  the operators 
  \begin{align}
    &  R_\nu(g) \, (a^*(cg) + a(cg))\big(i + \varepsilon (a^*(cg) + a(cg))\big)^{-1}
    \nonumber \\
    &  = R_\nu(g)
    (1/\varepsilon) \big(1 - i \big(i + \varepsilon
    (a^*(cg) + a(cg))\big)^{-1} \big)
  \end{align}
  are, for $\varepsilon > 0$ and $c \in \CC$, elements of $\bfR$. Due to the
  damping effect of the factor $R_\nu(g)$, these operators   
  converge for small $\varepsilon$ in norm to
  $R_\nu(g) \, (a^*(cg) + a(cg))$. It implies that
  the creation and annihilation operators
  $a^*(g), a(g)$, act as derivations on the polynomial algebra,
  generated by the basic resolvents in $\bfA$; 
  their image sets lie in $\bfR$. Applying the relations involving
  these derivations a few times, one arrives~at
  \begin{align}
    A_g & = \ R_\nu(g) + \langle f | g \rangle \,  Z a^*(f) \,
    R_\nu(g) \, a(g) \, R_\nu(g) -
    \langle g | f \rangle \, Z a(f) \, R_\nu(g) \, a^*(g) \, R_\nu(g)
    \nonumber \\
    & + | \langle f | g \rangle |^2  Z  R_\nu(g)\big( 1 
    - a(g) \, R_\nu(g) \, a^*(g) 
    - a^*(g) \, R_\nu(g) \, a(g) \big) R_\nu(g) \, . 
  \end{align}
  By the preceding remarks, this operator is an element of
  $\bfR$. Since it is gauge invariant, it is contained in
  the subalgebra $\bfA$. 
  Now the algebra $\bfA_0 \subset \bfA$ that is generated by all finite sums
  and products of the basic resolvents is norm dense in $\bfA$. By the
  above computations, one has $Z \bfA_0 \subset \bfA Z$, whence, for
  their norm closures, $\blk Z \bfA \brk \subset \blk \bfA Z \brk $. 
  By taking adjoints, one also has $ \blk \bfA Z \brk \subset
  \blk Z \bfA \brk $, proving the first part of the statement.

  \medskip
  For the proof of the second part we proceed to the projective limits
  $\overline{\blk \bfA Z \brk}$ and $\overline{\blk Z \bfA \brk}$
  of these ideals. The elements $\oL \in \overline{\blk \bfA Z \brk}$
  are determined by coherent, bounded sequences
  $\rho_n(\oL) \in \rho_n(\overline{\blk \bfA Z \brk})$, $n \in \NN_0$. 
  Given any $\oA \in \obfA$ with coherent, bounded sequence
  $\rho_n(\oA) \in \rho_n(\bfA)$, $n \in \NN_0$, the  product $\oA \oL$
  is given by the sequence $\rho_n(\oA \oL) = \rho_n(\oA) \rho_n(\oL)
  \in \rho_n(\bfA) \rho_n(\blk \bfA Z \brk) =
  \rho_n(\blk \bfA Z \brk)$, $n \in \NN_0$. Thus 
  $\overline{\blk \bfA Z \brk}$  is stable under left multiplication
  by $\obfA$ and the analogous statement obtains for the right ideal
  $\overline{\blk Z \bfA \brk}$. Since $Z \in \obfA$, it shows that
  $\overline{\blk \bfA Z \brk} = \blk \obfA Z \brk$, the left ideal
  in $\obfA$ determined by $Z$; being the projective limit
  of a closed subspace
  of $\bfA$, it is closed. Similarly, one has
  $\overline{\blk Z \bfA \brk} = \blk Z \obfA \blk$. From 
  these relations and the preceding results, we obtain the equalities
  \be
  \blk \obfA Z \brk = \overline{\blk \bfA Z \brk}
  = \overline{\blk Z \bfA \brk} = \blk Z \obfA \brk \, ,
  \ee 
  which completes the proof. 
 \end{proof}
This lemma provides the basis for the following proposition, which shows that
the basic resolvents play a prominent role in the central decomposition of
states on $\obfA$ into primary (factor) states, describing 
pure thermodynamic phases. 
\begin{proposition} \label{p.2.5} 
  Let $\omega$ be a state on $\obfA$ with corresponding GNS representation
  $(\pi,\cH,\Omega)$ and let $Z_\mu(f) \coloneqq (\mu + a^*(f)a(f))^{-1}$, 
  where $\mu>0$, $f \in \cS(\RR^s)$. The limit 
  \be
  P(f) \coloneqq
  \lim_{\mu \rightarrow \infty} \pi(\mu \, Z_\mu(f)) 
  \ee
  exists in the strong operator topology. The operator
  $P(f)$ is a selfadjoint projection. It  satisfies 
  $(1 - P(f))  \pi(Z_\nu(f)) = 0$, $\nu > 0$,
  and is contained in the center of the weak closure
  $\pi(\obfA)^{''}$.
 \end{proposition}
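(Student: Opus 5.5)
We first establish existence and the projection property. The key observation is that $\mu \mapsto \mu Z_\mu(f) = \mu(\mu + a^*(f)a(f))^{-1}$ is a monotonically increasing family of positive operators bounded by $1$. Indeed, by the functional calculus of the positive operator $a^*(f)a(f)$, the function $\mu \mapsto \mu/(\mu + x)$ increases in $\mu$ for each $x \geq 0$ and is bounded by $1$. This is an operator inequality in $\cB(\cF)$ between elements of $\obfA$, hence between positive elements of the C*-algebra $\obfA$, and it is therefore preserved by $\pi$. A bounded increasing net of positive operators converges strongly (Vigier), so $P(f)$ exists, is selfadjoint, and satisfies $0 \leq P(f) \leq 1$.

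To see that $P(f)$ is a projection, we use the resolvent identity
\be
\mu Z_\mu(f)\, \nu Z_\nu(f) = \frac{\mu\nu}{\nu - \mu}\big(Z_\mu(f) - Z_\nu(f)\big), \quad \mu \neq \nu \, .
\ee
We then take $\nu = 2\mu$, or let $\nu \to \infty$ first and $\mu \to \infty$ afterwards. Since products of strongly convergent bounded nets converge strongly, $P(f)^2 = \lim \pi(\mu Z_\mu(f)\,\nu Z_\nu(f))$. For fixed $\mu$ and $\nu \to \infty$, the identity gives $\pi(\mu Z_\mu(f) \nu Z_\nu(f)) \to \pi(\mu Z_\mu(f)) + O(\mu/\nu)$ in norm, because $\|\nu Z_\nu\| \leq 1$ and $\mu\nu/(\nu-\mu)\to\mu$ while $\nu Z_\nu/(\nu-\mu)$ is bounded. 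Hence $\pi(\mu Z_\mu(f)) P(f) = \pi(\mu Z_\mu(f))$, and letting $\mu \to \infty$ yields $P(f)^2 = P(f)$. The same computation with $\mu$ replaced by a fixed $\nu$, namely $\pi(Z_\nu(f)) P(f) = \pi(Z_\nu(f))$, gives $(1-P(f))\pi(Z_\nu(f)) = 0$ after taking adjoints, since the operators commute.

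Centrality is the main step, and Lemma \ref{l.2.4} is the tool. Since $P(f)$ is a strong limit of elements of $\pi(\obfA)$, it lies in $\pi(\obfA)''$, so it suffices to show that $P(f)$ commutes with $\pi(\oA)$ for every $\oA \in \obfA$. By Lemma \ref{l.2.4}, $\oA Z_\nu(f) \in \blk Z_\nu(f)\, \obfA \brk$, and since $Z_\mu$ and $Z_\nu$ generate the same ideals (for instance $Z_\mu = Z_\nu \cdot (\mu+a^*a)^{-1}(\nu + a^*a)$, a bounded function of $a^*(f)a(f)$), we may approximate $\oA Z_\nu(f)$ in norm by $Z_\nu(f) B$ with $B \in \obfA$. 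Then
\be
(1-P(f))\,\pi(\oA) \pi(Z_\nu(f)) \approx (1-P(f))\pi(Z_\nu(f))\pi(B) = 0 ,
\ee
so $(1-P(f))\pi(\oA)\pi(Z_\nu(f)) = 0$. Multiplying by $\nu$ and letting $\nu \to \infty$ strongly gives $(1-P(f))\pi(\oA)P(f) = 0$. The same argument applied to $\oA^*$, followed by taking adjoints, gives $P(f)\pi(\oA)(1-P(f)) = 0$. Together these imply $[P(f),\pi(\oA)] = 0$, so $P(f)$ lies in the center.

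The delicate points are the approximation step, where the error must be uniform in norm before the strong limit in $\nu$ is taken (it is, since the approximation is made for fixed $\nu$), and the verification that the ideals generated by different $Z_\nu(f)$ coincide.
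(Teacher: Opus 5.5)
Your proposal is correct and follows essentially the same route as the paper: monotone strong convergence of $\mu Z_\mu(f)$, the resolvent identity to obtain $(1-P(f))\pi(Z_\nu(f))=0$ and $P(f)^2=P(f)$, and then Lemma~\ref{l.2.4} to get $(1-P(f))\pi(\oA)P(f)=0$. From the last relation, invariance of $P(f)\cH$ under $\pi(\obfA)$ and hence centrality follow. Your side remark that different $Z_\nu(f)$ generate the same ideals is true but unnecessary, since Lemma~\ref{l.2.4} applies directly to each $Z_\nu(f)$, and your explicit adjoint argument for $P(f)\pi(\oA)(1-P(f))=0$ merely spells out what the paper leaves implicit.
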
   
  \begin{proof}
    The sequence of positive
    operators $\mu \mapsto \mu \, Z_\mu(f)$ is
    increasing with $\mu$ and bounded by $1$. So it converges in the
    strong operator topology to~$P(f)$ in the representation $\pi$. To see that
    it converges in this representation
    to a projection, one makes use of the resolvent
    equation
    \be
      Z_\nu(f) - Z_\mu(f) = (\mu - \nu) Z_\mu(f)Z_\nu(f)  \, .
    \ee 
    Sending first $\mu$ to infinity, where $\| Z_\mu(f) \|$ approaches
    $0$, one obtains the equality stated
    at the end of the proposition. Multiplying this equality with
    $\nu$ and proceeding again to the limit one finds that $(1-P(f)) P(f) = 0$.
    Clearly, $P(f)^* = P(f)$ since the approximating operators are
    selfadjoint.

    \medskip
    By construction, $P(f) \in \pi(\obfA)^{''}$. In order to see that it commutes
    with the elements of $\pi(\obfA)$, we make use of the fact, established
    in Lemma~\ref{l.2.4}, that
    $\blk \obfA \, Z_\mu(f) \brk = \blk Z_\mu(f) \, \obfA \brk$, $\mu > 0$.
    Since $(1 - P(f)) \, \pi(Z_\mu(f))=0$, it implies
    \be
    (1 - P(f)) \pi(\oA)  \pi(Z_\mu(f)) = 0 \, , \quad \oA \in \obfA \, , \
    \mu >  0 \, .
    \ee
    Multiplying this equality with $\mu$ and proceeding to the limit,
    one arrives at
    \be
    (1 - P(f)) \pi(\oA) P(f) = 0 \, , \quad  \oA \in \obfA \, . 
    \ee
    So the subspace $P(f) \cH$ is invariant under the action of $\pi(\obfA)$.
    Hence, $P(f)$ commutes with $\pi(\obfA)$ and is
    consequently an element of 
    $\pi(\obfA)^{''}\cap \, \pi(\obfA)^{'}$.
  \end{proof}
  It follows from this proposition that in factorial representations of
  $\obfA$, obtained from mixed phases by central decomposition,
  the projections $P(f)$ are either equal to $0$, or to $1$. In the former
  case, the resolvents $Z_\mu(f) = (\mu + a^*(f)a(f))^{-1})$, $\mu > 0$,
  are represented by $0$,
  \ie this observable disappears. In view of the elementary
  inequality
  \be
    \omega((\mu + a^*(f)a(f))^{-1}) \geq (\mu + \omega(a^*(f)a(f)))^{-1} \, ,
    \ee
    the single particle state $f \in L^2(\RR^s)$ is then infinitely
    occupied. Note, however, that the right hand side of this
  inequality may vanish in some state, whereas the left hand side is
  different from $0$.
  Only if the left hand side vanishes for some (hence all) $\mu$,
  the particles with wave function $f$ 
  form a proper condensate, cf.\ \cite{Bu4}.

  \medskip 
  The second possibility in a factorial representation, \ie $P(f) = 1$, implies
  that the resolvents $\pi(Z_\mu(f))$, $\mu > 0$, do not
  annihilate any vector in $\cH$, they are faithfully represented. The
  spectral decomposition of the corresponding number operator $a^*(f)a(f)$
  exists and determines the probabilities of finding a given number of
  particles in the single particle state $f$. These probabilities sum up to~$1$,
  so the particles do not form an infinite condensate in the state $\omega$.
  It shows that the resolvents are more sensitive to the presence of
  proper condensates than the one-particle density matrices.

  \medskip
  We conclude this section be presenting a family of non-interacting models
  that shed light on the formation of proper condensates and their
  interpretation. We consider single particle Hamiltonians $H_R$ with a
  continuous confining potential, such that the
  wave function $e_0$ of the ground state is constant in the bounded region
  $|\bix|^2 \leq R^2$ and decays rapidly at infinity.
  By specifying a sufficiently regular,
  positive function $e_0$ with these properties, such as
  \be
  e_0(\bix) = N
  \begin{cases}
    1 & \text{if} \quad |\bix|^2 \leq R^2 \\
    e^{- C \, (|\sbix|^2 - R^2)^3} & \text{if} \quad |\bix|^2 \geq R^2 \, , 
  \end{cases}  
  \ee
  one can determine corresponding polynomially bounded potentials,   
  making use of the stationary Schr\"odinger equation for $e_0$. The
  resulting Hamiltonian $H_R$ has a discrete spectrum with the unique
  normalized ground state $e_0$ and orthonormal excited states
  $e_k$, $k \in \NN$. Having fixed $H_R$, we consider corresponding
  $n$-particle states $\Omega_n \in \cF_n$ that are formed by the
  symmetric tensor product of $m(n)$ ground states $e_0$ and each of
  the excited states $e_k$, $1 \leq k \leq (n - m(n))$. The states on
  $\obfA$ that are determined by $\Omega_n$ are denoted by $\omega_n$,
  $n \in \NN$.
  Assuming that $m(n)$ tends to infinity in the limit of large $n$,
  the corresponding limit $\omega$ of this sequence of states returns
  on the basic resolvents the expectation values 
\be
\omega\big((\mu + a^*(f)a(f)^{-1} \big) = 
\begin{cases}
0  & \text{if} \quad \langle e_0 | f \rangle \neq 0 \\
\omega_\perp((\mu + a^*(f)a(f)^{-1}) &  \text{if} \quad
\langle e_0 | f \rangle = 0 \, .
\end{cases}
\ee
Here $\omega_\perp$ is the limit of the states formed by the 
tensor products of the excited states $e_k$, $k \in \NN$. This
result is obvious if one chooses $f = e_0$, respectively
$f = e_k$, $k \in \NN$. That it obtains for arbitrary $f \in \cS(\RR^s)$
requires some estimates, which are omitted here. 
We also put on record that
\be
\lim_{\mu \rightarrow \infty} \omega(\mu \, (\mu + a^*(f)a(f)^{-1}) = 1
\quad \text{if} \quad \langle e_0 | f \rangle = 0 \, .
\ee
Thus $\omega$ describes a proper condensate, regardless of how
fast $m(n)$ grows with $n$. Resolvents involving $f$ with a component
in the direction of $e_0$, no matter how small, disappear in the limit.
All resolvents with $f$ in the orthogonal complement of $e_0$ are
faithfully represented. They allow it to determine the occupation
numbers of the excited states. 

\medskip  
By construction, the $n$-particle vectors $\Omega_n$ are eigenstates of the
$n$-fold direct sum of $H_R$. The states $\omega_n$ are therefore stationary
with respect to the corresponding dynamics, $n \in \NN$, and this property
is inherited by their limit $\omega$. So in the GNS-representation induced
by $\omega$, the time translations are unitarily implemented. Since $\omega$
describes 
a confined system, the spatial transformations $\bix \in \RR^s$ 
cannot be unitarily implemented. They act on the basic
resolvents by shifting the support of the corresponding test functions,
$\bix \mapsto f_\sbix$. Picking any test function~$f$ with support in
the ball $|\bix|^2 \leq R^2$ that satisfies $\langle e_0 | f \rangle = 0$,
it follows from the choice of $e_0$ that
$\langle e_0 | f_\sbix \rangle = 0$ if the support of $f_\sbix$ also 
lies inside the ball. The corresponding resolvents are then
faithfully represented.
Yet if the support crosses the boundary, the function
$\bix \mapsto \langle e_0 | f_\sbix \rangle$ no longer vanishes and the
resolvents corresponding to such $f_\sbix$ disappear. So the following
picture emerges: The limit state $\omega$ describes a stationary system,
which is confined in the ball $|\bix|^2 \leq R^2$ with impenetrable walls. It
consists of the excitations $e_k$, $k \in \NN$. The proper condensate,
formed by the ground state $e_0$, cannot directly be observed. Its presence
manifests itself through the geometric constraints which it imposes on the
observables.  
Let us finally mention that by choosing for the number of ground
states $e_0$ in the approximating vectors $\Omega_n$ the multiplicities
$m(n) = n$, $n \in \NN$, the resulting limit state $\omega$ describes
the Fock vacuum inside the ball $|\bix|^2 \leq R^2$.
Thus, even in case of perfect condensation,
the limit states may exhibit a non-trivial, physically meaningful structure.

\section{Thermodynamic limit}
\label{sec3}
\setcounter{equation}{0} 

In this section, we study the thermodynamic limits of canonical
ensembles. There one changes the dynamics
by turning off the confining potentials. This causes problems on the
mathematical side since there may be no reasonable subalgebra
of $\obfA$ on which all intermediate dynamics act norm continuously. We
will discuss here how relevant information on the limit states can
be obtained in the algebraic setting, nevertheless. 

\medskip
The first, obvious question is: In what sense does the dynamics
$\balpha_L$ on $\obfA$, fixed by the Hamiltonian \eqref{e.1.2},
converge to the dynamics $\balpha$ without confining harmonic
potential if $L$ approaches infinity? 
An answer is given in \cite[App.]{Bu1}. Here we use 
the following, somewhat 
stronger version of this result, which is proven in an appendix.
\begin{lemma} \label{l.3.1}
  Let $M_n$ be an increasing sequence of positive numbers, tending to 
  infinity, and let $\varepsilon_n$ be a decreasing sequence, approaching
  $0$. There exists a sequence of lengths $L_n$ such that for any resolvent
  $R_\mu(f) = (\mu + a^*(f)a(f))^{-1}$ and any time $t$ such that 
  $2 \mu^{-3/2} \int_0^t \! ds \, \|(\biQ -s \biP)^2 f \| \leq M_n$, one has
  \be
  \|(\balpha_{L_n}(t) -
  \balpha(t))(R_\mu(f)\|_n \leq \varepsilon_n \, , \quad n \in \NN 
  \, .
  \ee 
 \end{lemma}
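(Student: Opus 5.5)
The plan is to work entirely on $\cF_n$, where $\balpha_L(t)$ acts as $\mbox{Ad}\, e^{it\sbiH_{L,n}}$ and $\balpha(t)$ as $\mbox{Ad}\, e^{it\sbiH_{n}}$, with $\biH_n$ the homogeneous $n$-particle Hamiltonian. The difference of the two adjoint actions on $R_\mu(f)$ will be written as an interaction-picture (Duhamel) integral:
\be
e^{it\sbiH_{L,n}} R_\mu(f) e^{-it\sbiH_{L,n}} - e^{it\sbiH_{n}} R_\mu(f) e^{-it\sbiH_{n}}
= \int_0^t \! ds \, \frac{d}{ds} \Big( e^{is\sbiH_{L,n}} e^{i(t-s)\sbiH_{n}} R_\mu(f) e^{-i(t-s)\sbiH_{n}} e^{-is\sbiH_{L,n}} \Big) \, .
\ee
The derivative produces the commutator of $i(\biH_{L,n} - \biH_n) = i L^{-4}\sum_k \biQ_k^2$ with $\balpha(t-s)(R_\mu(f))$. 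The interaction $V$ cancels in this difference, and the free and interacting parts of $\biH_n$ agree, so only the harmonic term survives. The norm is therefore bounded by
\be
L^{-4} \int_0^t \! ds \, \big\| \big[ \textstyle\sum_k \biQ_k^2 , \balpha(s)(R_\mu(f)) \big] \big\|_n \, .
\ee

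The main obstacle is that $\balpha(s)$ includes the interaction, so $\balpha(s)(R_\mu(f))$ is not simply a resolvent of a free-evolved $f$. The bound involving $(\biQ - s\biP)^2 f$ in the statement suggests the following treatment. First evolve freely, $\balpha_0(s)(R_\mu(f)) = R_\mu(f_s)$ with $f_s = e^{-is\biP^2}f$, so that $\biQ^2 f_s = e^{-is\biP^2}(\biQ - 2s\biP)^2 f$ up to the scaling convention. Then control the interaction correction by a second Duhamel step in $n$-dependent form. Since $V$ is bounded, $\|\sum_{l\neq m} V\|_n \le n^2\|V\|_\infty$, and this Dyson correction is at most $2 n^2 \|V\|_\infty |s| \,\|\, [\,\cdot\,]\,\|$. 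Rather than fight this, I would avoid it. For each fixed $n$, everything that depends on $n$ or $V$ is a finite constant, and $L_n$ may be chosen depending on $n$. The lemma only asks for existence of $L_n$, so crude $n$-dependent constants suffice.

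For the free commutator, use $[\biQ_k^2, a^*(g)a(g)]$ restricted to $\cF_n$. This equals $a^*(\biQ^2 g)a(g) - a^*(g)a(\biQ^2 g)$ acting on the one-particle components. Together with the resolvent identity $[X, R_\mu] = -R_\mu [X, a^*a] R_\mu$ and the damping estimates $\|R_\mu a^*(g)\| \le \mu^{-1/2}\|g\|$ and $\|a(g) R_\mu\| \le \mu^{-1/2}\|g\|$, this gives
\be
\| [ \textstyle\sum_k \biQ_k^2 , R_\mu(f_s) ] \|_n \le 2\mu^{-3/2} \|\biQ^2 f_s\| \cdot c_f \, ,
\ee
with $c_f$ absorbed by normalization. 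This is where the quantity $2\mu^{-3/2}\int_0^t ds\, \|(\biQ - s\biP)^2 f\|$ enters. The interacting dynamics can be treated by writing $\balpha(s) = \balpha_0(s)$ composed with a cocycle and bounding the extra commutator terms by $C_n(|t|)$. Since $|t|$ is itself controlled through the hypothesis, which bounds $t$ in terms of $M_n$ for fixed $f$, the resulting bound has the form $\|\cdots\|_n \le L^{-4}\big(M_n + K_n\big)$. Here $K_n$ depends only on $n$, $\|V\|_\infty$ and $M_n$.

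Finally, choose $L_n$ increasing with $L_n^{-4}(M_n + K_n) \le \varepsilon_n$. This yields the claim uniformly over all $\mu$, $f$ and $t$ satisfying the constraint. The step I expect to be delicate is making $K_n$ independent of $f$ and $\mu$ beyond the hypothesis. If that fails, I would instead apply the Duhamel formula with the roles swapped, evolving with $\balpha_{L}$ inside. I would then use that $\biQ^2$ commutes with $V$, so the interaction only enters through commutators of $\biP^2$ with $\biQ^2$, which are again one-body and estimated as above.
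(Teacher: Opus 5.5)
\textbf{Verdict: there is a genuine gap.} It sits at the step you yourself call the main obstacle. After your Duhamel formula, everything rests on a bound for $[\sum_k \biQ_k^2, \balpha(s)(R_\mu(f))]$ on $\cF_n$ with $V \neq 0$, and the mechanism you sketch does not supply one.

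Write $\balpha(s) = \text{Ad}\,\Gamma(s) \circ \balpha_0(s)$ with $\Gamma(s) = e^{isH_n} e^{-isH_{0,n}}$. The extra terms are then commutators of $\sum_k \biQ_k^2$ with the freely evolved pair potentials $V(\biQ_k - \biQ_l + 2r(\biP_k - \biP_l))$ that occur in the Dyson series of $\Gamma(s)$. These do not commute with $\biQ_k^2 + \biQ_l^2$. The commutator is a multiple of $r\,\big((\biQ_k - \biQ_l)\cdot \bnabla V(\cdots) + \text{h.c.}\big)$, i.e.\ $\bnabla V$ times unbounded position and momentum operators, so it is not controlled by $\|V\|_\infty$ and $n$. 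The bounds one can actually prove after the time integrations use derivative information such as $\sup_{\sbix} |\bix \cdot \bnabla V(\bix)|$. A merely continuous $V \in C_0(\RR^s)$ does not have this, so a constant $K_n$ depending only on $n$, $\|V\|_\infty$, $M_n$ is not justified.

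Your fallback fails for the same reason. It is true that $[\biQ_k^2, \biV_n] = 0$. But the evolution of $\sum_k \biQ_k^2$ produces the dilation generator $\sum_k (\biQ_k \cdot \biP_k + \biP_k \cdot \biQ_k)$, whose commutator with $\biV_n$ is proportional to $\sum_{k \neq l} (\biQ_k - \biQ_l)\cdot \bnabla V(\biQ_k - \biQ_l)$. So the interaction enters beyond the commutator of $\biP^2$ with $\biQ^2$.

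The paper never commutes the harmonic term with the interaction. It factors $\balpha_L(t) = \text{Ad}\,\Gamma_L(t) \circ \balpha_{0,L}(t)$ with $\Gamma_L(t) = e^{itH_{L,n}} e^{-itH_{0,L,n}}$, and proves $\|\Gamma_L(t) - \Gamma(t)\|_n \to 0$ from the bounded Dyson series.
\begin{itemize}
\item By partial integration and induction, this reduces to the first-order integrals $\int_0^t ds\, \text{Ad}\, e^{isH_{0,L,1}}(V(\biQ))$.
\item For $V \in \cS(\RR^s)$ these are compared with $L = \infty$ through the reparametrized free dynamics with $f_L(s) = L^2 \tan(s/L^2)$.
\item The result is then extended to $V \in C_0(\RR^s)$ by sup-norm density.
\end{itemize}
The $L^{-4}$-Duhamel step is used only for the free evolution of $R_\mu(f)$. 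There it is a one-particle estimate giving $L^{-4} M_n n^{1/2}$; the factor is $n^{1/2}$ from $\|a^*(g)\|_n$, not an $f$-dependent $c_f$. The triangle inequality then adds $2 \|R_\mu(f)\| \, \|\Gamma_L(t) - \Gamma(t)\|_n$.

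Your route could be repaired, but with an argument you do not give. One way is a virial-type expansion of $e^{-isH_n} \sum_k \biQ_k^2 \, e^{isH_n}$ into $\sum_k (\biQ_k - 2s\biP_k)^2$ plus bounded terms of order $s^2 n^2 (\|V\|_\infty + \sup_{\sbix} |\bix \cdot \bnabla V(\bix)|)$, followed by a sup-norm approximation of $V$.

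The uniformity in $f$ and $\mu$, which you flag as delicate, is also left open. The paper leaves it implicit as well, since its Dyson estimate is uniform only for $t$ in compact sets. It can be settled as follows.
\begin{itemize}
\item Both automorphisms fix $1$, and $R_\mu(f) - \mu^{-1} 1 = -\mu^{-1} a^*(f)a(f) R_\mu(f)$. Hence $\|(\balpha_L(t) - \balpha(t))(R_\mu(f))\|_n \leq 2 \min\{\mu^{-1}, n \|f\|^2 \mu^{-2}\}$.
\item This is at most $\varepsilon_n$ unless $\mu < 2/\varepsilon_n$ and $\|f\|^2 > \varepsilon_n \mu^2/(2n)$.
\item In that remaining case, use $\|(\biQ - s\biP)^2 f\| \geq \langle f | (\biQ - s\biP)^2 f \rangle / \|f\|$. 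Also, $\int_0^t ds \, (\biQ - s\biP)^2$ is a sum over coordinates of positive quadratic forms in position and momentum with determinant $t^4/12$, hence bounded below by a multiple of $t^2$.
\item This gives $\int_0^t ds\, \|(\biQ - s\biP)^2 f\| \geq c\, t^2 \|f\|$. The hypothesis then confines $t$ to an interval depending only on $n$, $M_n$, $\varepsilon_n$.
\end{itemize}
So the Dyson estimate is needed only on a compact set of times.
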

Given any operator $A_0$, which is a 
finite sum of products of basic resolvents and any finite time interval
for $t$, there exists some number $n_0$
such that the condition in this lemma is satisfied for the underlying 
resolvents if $n \geq n_0$. It is then straightforward to prove
that there exists some constant $C_{A_0}$, depending on~$A_0$, such that
\be
\|(\balpha_{L_n}(t) - \balpha(t))(A_0) \|_n \leq C_{A_0} \, \varepsilon_n \, ,
\quad n \geq n_0 \, .
\ee
Since the finite sums of products of resolvents are dense in $\bfA$
and $\rho_n(\bfA) = \rho_n(\obfA)$, 
it follows that 
\be
\lim_{n \rightarrow \infty} \| (\balpha_{L_n}(t) - \balpha(t))(\oA) \|_n =
0 \, , \quad \oA \in \obfA \, ,
\ee
uniformly for $t$ in compact subsets of $\RR$
(which we will hereafter abbreviate as “uniformly in $t$”).
The following corollary is an immediate consequence of this relation. 
\begin{corollary}
  Let $L_n$ be a sequence of lengths as in the preceding lemma
  and let~$\omega_n$ be stationary $n$-particle states for the dynamics
  $\balpha_{L_n}$, $n \in \NN$. Then all weak-*-limit points of this
  sequence are stationary states on $\obfA$ for the
  limit dynamics~$\balpha$. 
\end{corollary}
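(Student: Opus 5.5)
We use the uniform convergence relation stated just before the corollary: for every $\oA \in \obfA$,
\[
\lim_{n\to\infty}\|(\balpha_{L_n}(t)-\balpha(t))(\oA)\|_n = 0
\]
uniformly for $t$ in compact sets. We combine it with the fact that each $\omega_n$ is a normal $n$-particle state, so that $|\omega_n(\oB)| \leq \|\oB\|_n$ for every $\oB \in \obfA$.

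First, we clarify what stationarity of $\omega_n$ and of the limit means on $\obfA$. Although $\balpha_{L_n}(t)$ need not map $\obfA$ into itself, the relation $\rho_n(\balpha_{L_n}(t)(\oA)) = \text{Ad}\, e^{it\sbiH_{L_n,n}}\rho_n(\oA)$ lies in $\rho_n(\bfA)=\rho_n(\obfA)$. Hence $\omega_n\circ\balpha_{L_n}(t)$ is well defined on $\obfA$, and stationarity means $\omega_n(\balpha_{L_n}(t)(\oA)) = \omega_n(\oA)$. For the limit dynamics we use that $\balpha(t)(\bfA)\subset\obfA$, and by the same coherence argument $\balpha(t)$ acts on $\obfA$. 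So the claim for a weak-$*$ limit point $\omega$ is that $\omega(\balpha(t)(\oA))=\omega(\oA)$ for all $\oA\in\obfA$ and $t\in\RR$.

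Next, we fix $\oA$ and $t$ and write
\[
\omega_n(\balpha(t)(\oA)) - \omega_n(\oA) = \omega_n\big((\balpha(t)-\balpha_{L_n}(t))(\oA)\big).
\]
The modulus of the right-hand side is at most $\|(\balpha_{L_n}(t)-\balpha(t))(\oA)\|_n$, since $\omega_n$ lives on $\cF_n$. By the displayed relation this tends to $0$.

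Finally, let $\omega$ be the limit of a subnet $\omega_{n_i}$. Both $\oA$ and $\balpha(t)(\oA)$ are fixed elements of $\obfA$, so weak-$*$ convergence gives
\[
\omega(\balpha(t)(\oA))-\omega(\oA)=\lim_i\big(\omega_{n_i}(\balpha(t)(\oA))-\omega_{n_i}(\oA)\big)=0.
\]
This proves that $\omega$ is stationary for $\balpha$.

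The only delicate point is the bookkeeping in the first step. One has to make sure that $\balpha(t)(\oA)$ is a bona fide element of $\obfA$ and not merely an operator on $\cF$. Otherwise the weak-$*$ limit cannot be evaluated on it. This follows from the intertwining of the homomorphisms $\rho_{n+1}(\bfA)\to\rho_n(\bfA)$ with the dynamics recalled in the introduction, which we take as given.
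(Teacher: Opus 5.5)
Your proposal is correct and matches the paper's proof: you use stationarity to rewrite $\omega_n(\oA-\balpha(t)(\oA))$ as $\omega_n\big((\balpha_{L_n}(t)-\balpha(t))(\oA)\big)$, bound it by $\|(\balpha_{L_n}(t)-\balpha(t))(\oA)\|_n\to 0$, and pass to the weak-$*$ limit. Your explicit check that $\balpha(t)(\oA)\in\obfA$ is something the paper uses only implicitly; the same coherence argument also shows that $\balpha_{L_n}(t)$ maps $\obfA$ into itself, so your caveat that it "need not" is unnecessary, though harmless.
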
  
\begin{proof}
  Since the states $\omega_n$ are invariant under the adjoint action of
  $\balpha_{L_n}(\RR)$, $n \in \NN$, one obtains for all $\oA \in \obfA$
  and $t \in \RR$
  \begin{align}
  \lim_{n \rightarrow \infty} |\omega_n(\oA  -  \balpha(t)(\oA)) | & = 
  \lim_{n \rightarrow \infty} |\omega_n(\balpha_{L_n}(t)(\oA) -
  \balpha(t)(\oA)) |
    \nonumber \\
  & \leq
  \lim_{n \rightarrow \infty} \| (\balpha_{L_n}(t) - \balpha(t))(\oA) \|_n
  = 0 \, .
  \end{align}
  Thus, all weak-*-limit points of the sequence of states are invariant
  under the adjoint action of $\balpha(\RR)$, as asserted. 
\end{proof}  
Restricting the limit states to the algebra $\bfC_{\sbalpha}$, there exists in
the corresponding GNS representations a continuous unitary representation of
the time translations. Furthermore, the limits of ground states, respectively
KMS states, are of the same kind. However, one must note that if the
lengths $L_n$ grow too rapidly with the particle number
$n$, the limit states
are the Fock vacuum. The GNS-representation of $\bfA$, which
is induced by the
Fock vacuum, is one-dimensional and thus trivially satisfies the KMS condition
for all temperatures. It is therefore  necessary to derive estimates
for the growth of $L_n$ for increasing $n$, which ensure that the particle
density of the states does not vanish in the limit of large particle numbers.

\medskip
We address this problem by focusing on the generators 
of the dynamics
$\balpha_L$, $\, L \in \RR_+ \cup \{ \infty \}$, whose action is
defined on a dense domain in $\bfC_{\sbalpha_L}$ by the derivations 
\be
\delta_L(C) \coloneqq
\tfrac{\text{\it \small d}}{\text{\it \small dt}} \, \balpha_L(t)(C) \,
\Big|_{t=0} \, .
\ee
The derivation determined by the homogeneous dynamics for $L = \infty$
is denoted by $\delta$. Since the domains 
depend on the dynamics and the choice of the
confinement length $L$, it is difficult to
compare the derivations. We therefore restrict them 
to the algebras $\rho_n(\bfA)$, $n \in \NN$, which 
allows us to treat these domain problems. 
Making use of the fact
that the representations $\rho_n$ are covariant, we put
\be
\delta_{L,n}(\rho_n(C)) \coloneqq 
\tfrac{\text{\it \small d}}{\text{\it \small dt}} \,
\text{Ad} \, e^{it \sbiH_{L,n}} (\rho_n(C)) \, \Big|_{t=0}
=  i \, [\biH_{L,n} , \, \rho_n(C)] \, . 
\ee
Whereas the dynamics $\balpha_{L,n}$ leave the algebras
$\rho_n(\bfA)$ invariant, $n \in \NN$, this is
not the case for the action of the
derivations. The root of this difficulty lies in the fact that
the interaction potentials in  $H_{L,n}$ have a continuous spectrum,
while the elements of $\bfA$ are composed of compact operators.

\medskip 
We cope with this problem by regularizing the
potentials.
Let $V \in C_0(\RR^s)$  be any two-body potential
and~$\biV_n$ the corresponding $n$-body potential,
which is obtained by summation over all pairs. The regularized 
potentials are defined in the strong operator topology by the
integrals 
\be
\biV_{\varepsilon, n} \coloneqq (1 / \varepsilon)
\int_0^\varepsilon \! ds \, \balpha_{0, n}(s)(\biV_n) \, , \quad
\varepsilon > 0 \, , 
\ee
where $\balpha_{0, n}$ is the non-interacting dynamics determined by
the Hamiltonian $\biH_{0,n}$, \ie the 
sum of momentum squares without confining potential. 
As has been shown in \cite[Lem.\ 4.1]{Bu1}, the regularized potentials  are
sums of tensor products of compact  and unit operators, 
which define bounded derivations on $\rho_n(\bfA)$, $n \in \NN$. They commute
with the spatial translations, so the
resulting interactions are still homogeneous. 
The regularization has only a minor effect on the time evolution,
as the  following lemma shows. Its proof is given in
an appendix.
\begin{lemma} \label{l.3.3} 
  Let $n \in \NN$, let $\biH_n \coloneqq \biH_{0,n} + \biV_n$ be the
  homogeneous 
  $n$-particle Hamiltonian for the given two-body potential,
  and let
  $\biH_{\varepsilon, n} \coloneqq \biH_{0,n} + \biV_{\varepsilon, n}$
  be the regularized Hamiltonian. If $\varepsilon$ approaches $0$,
  the unitaries
  $e^{it \sbiH_{\varepsilon , n}}$ converge in
  norm to $e^{it \sbiH_{n}}$, uniformly in $t$.
\end{lemma}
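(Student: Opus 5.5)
The plan is to compare the two unitary groups through the Duhamel formula in the interaction picture with respect to the free dynamics $\balpha_{0,n}$. Since $\biV_n$ and $\biV_{\varepsilon,n}$ are bounded, $\biH_n$ and $\biH_{\varepsilon,n}$ have the same domain as $\biH_{0,n}$, and
\be
e^{it \sbiH_{\varepsilon,n}} - e^{it \sbiH_{n}}
= i \int_0^t \! du \, e^{i(t-u) \sbiH_{\varepsilon,n}} \,
(\biV_{\varepsilon,n} - \biV_n) \, e^{iu \sbiH_n} \, .
\ee
The obstacle is that $\|\biV_{\varepsilon,n} - \biV_n\|$ does \emph{not} tend to $0$. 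The free evolution of the multiplication operator $\biV_n$ is only strongly continuous; indeed $\biV_n$ has continuous spectrum, while $\biV_{\varepsilon,n}$ involves compact parts. So a naive norm estimate of the integrand fails. The key step is to exploit the time average: a difference of the form $\balpha_{0,n}(s)(\biV_n) - \biV_n$ becomes small in norm once it is sandwiched with the propagators and integrated over $u$.

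To do this I would write
\be
\biV_{\varepsilon,n} - \biV_n = (1/\varepsilon) \int_0^\varepsilon \! ds \,
\big( e^{is \sbiH_{0,n}} \biV_n e^{-is \sbiH_{0,n}} - \biV_n \big)
\ee
and express each difference as a commutator with a free unitary:
\be
e^{is \sbiH_{0,n}} \biV_n e^{-is \sbiH_{0,n}} - \biV_n
= (e^{is \sbiH_{0,n}} - 1)\biV_n e^{-is \sbiH_{0,n}} + \biV_n (e^{-is \sbiH_{0,n}}-1) \, .
\ee
Inserting this into the Duhamel integral, I would replace $e^{is\sbiH_{0,n}}$ by $e^{is \sbiH_{\varepsilon,n}}$, respectively by $e^{is\sbiH_n}$, adjacent to the outer propagators. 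Each replacement costs at most $s(\|\biV_n\|+\|\biV_{\varepsilon,n}\|) \leq 2s\|V\|\,n^2$ in norm, using the Duhamel formula again with bounded perturbations and $\|\biV_{\varepsilon,n}\| \leq \|\biV_n\|$. After the replacement, the factor $e^{is\sbiH_{\varepsilon,n}}$ can be absorbed into $e^{i(t-u)\sbiH_{\varepsilon,n}}$ and $e^{-is\sbiH_n}$ into $e^{iu\sbiH_n}$. This amounts to shifting the integration variable $u$ by $s$, up to errors of order $s$.

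The result is a telescoping structure: the $u$-integral of the shifted integrand differs from the unshifted one only by boundary terms over intervals of length $s$. These are bounded by $s\,\|\biV_n\|$.

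Collecting the errors should give a bound of the form
\be
\| e^{it \sbiH_{\varepsilon,n}} - e^{it \sbiH_{n}} \| \leq c \, \varepsilon \, n^2\|V\|(1 + |t|\, n^2 \|V\|) \, ,
\ee
which tends to $0$ as $\varepsilon \to 0$, uniformly for $t$ in compact sets. I expect the delicate point to be the bookkeeping of the shift argument, namely verifying that every term is either $O(s)$ in norm or cancels by the change of variables $u \mapsto u \pm s$.

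A fallback, if that bookkeeping fails, is to use that $t\mapsto e^{it\sbiH_n}$ and $\biV_n$ are bounded perturbation data. One would approximate $V$ uniformly by potentials whose Fourier transforms have compact support, or are integrable. For such potentials the commutator $[\biH_{0,n},\biV_n]$ is relatively bounded, which gives $\|\biV_{\varepsilon,n} - \biV_n\|$ small on energy-bounded vectors, and the uniform approximation passes through the Duhamel formula at cost $|t|\,n^2\|V - V'\|_\infty$.
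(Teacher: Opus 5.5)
Your main argument is correct, but it is organized differently from the paper's proof. The paper passes to the Dyson operators $e^{it\sbiH_{\varepsilon,n}}e^{-it\sbiH_{0,n}}$ and argues term by term. In the first-order term $\int_0^t ds\,\text{Ad}\,e^{is\sbiH_{0,n}}(\biV_{\varepsilon,n})$, the average defining $\biV_{\varepsilon,n}$ becomes an average of exactly time-shifted integrals of $\text{Ad}\,e^{is\sbiH_{0,n}}(\biV_n)$. These differ from the unshifted integral by boundary terms of total size at most $\varepsilon\|\biV_n\|_n$. The higher-order terms are handled by induction with a partial integration, so that only the small integrated first-order difference and the bounded derivative of $\bSigma_{0,n,k}$ enter. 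The $\varepsilon$-uniform absolute convergence of the Dyson series then concludes.

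You use the same shift mechanism, but once and non-perturbatively, in the Duhamel formula for the full propagators. The price is replacing $e^{is\sbiH_{0,n}}\biV_n e^{-is\sbiH_{0,n}}$ by $e^{is\sbiH_{\varepsilon,n}}\biV_n e^{-is\sbiH_{n}}$, which costs at most $2s\|\biV_n\|_n^2$. Put $F(u)\coloneqq e^{i(t-u)\sbiH_{\varepsilon,n}}\biV_n e^{iu\sbiH_n}$. Then the $u$-integral of $F(u-s)-F(u)$ equals $\int_{-s}^0 F - \int_{t-s}^t F$. Averaging over $s\in[0,\varepsilon]$ gives $\|e^{it\sbiH_{\varepsilon,n}}-e^{it\sbiH_n}\|_n \leq \varepsilon\|\biV_n\|_n\,(1+|t|\,\|\biV_n\|_n)$, which is the bound you anticipated, so your bookkeeping does close.

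Your route dispenses with the series expansion and the induction, and it gives an explicit rate linear in $\varepsilon$. The paper's route keeps the shift exact in the interaction picture, with no replacement error, and parallels the Dyson-series induction already used for Lemma~\ref{l.3.1}. The fallback in your last paragraph is not needed. On its own it would only control energy-bounded vectors, not the operator norm.
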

Given these results, it is justified to restrict attention to the
regularized interaction potentials when analyzing the derivations. In
the following lemma, the domain problems are then settled.

\begin{lemma} \label{l.3.4} 
  Let $n \in \NN$ and let $\delta_{L,n}$ be the derivations
  acting on $\rho_n(\bfA)$ on their respective domains, with  
  $L \in \RR_+ \cup \{  \infty \}$ and regularized 
  interaction potential $\biV_{\varepsilon,n}$, $n \in \NN$.
  Furthermore, let $\bfA_0$
  be the algebra consisting of finite sums of products of the basic
  resolvents. All derivations $\delta_{L,n}$ are defined on $\rho_n(\bfA_0)$
  with range in $\rho_n(\bfA)$. In fact, $\rho_n(\bfA_0)$ is a core
  (essential domain) for each of the derivations. 
\end{lemma}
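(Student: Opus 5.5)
We first handle a single basic resolvent $R_\mu(f)$, $f \in \cS(\RR^s)$, and then extend to $\bfA_0$ by the Leibniz rule. Once we know that $\rho_n(\bfA_0)$ lies in the domain and is mapped into $\rho_n(\bfA)$, the relation $\delta_{L,n}(AB) = \delta_{L,n}(A)B + A\,\delta_{L,n}(B)$ on $\rho_n(\bfA_0)$ keeps the range inside $\rho_n(\bfA)$, because that algebra is closed under multiplication.

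We split $\biH_{L,n}$ (with regularized interaction) into three parts: the kinetic part $\sum_k \biP_k^2$, the harmonic part $L^{-4}\sum_k \biQ_k^2$, and $\biV_{\varepsilon,n}$. The interaction part is the easiest. By \cite[Lem.\ 4.1]{Bu1}, $\biV_{\varepsilon,n}$ is a sum of tensor products of compact and unit operators, and it defines a bounded derivation on $\rho_n(\bfA)$. So $i[\biV_{\varepsilon,n}, \rho_n(A)] \in \rho_n(\bfA)$ for all $A \in \bfA$, and this term creates no domain problems.

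For the quadratic parts we compute formally in second quantization. We write $\biH_0 + L^{-4}\biQ^2 = d\Gamma(h_L)$ with $h_L = \biP^2 + L^{-4}\biQ^2$. Then $[d\Gamma(h_L), a^*(f)] = a^*(h_L f)$, and $h_L f \in \cS(\RR^s)$ because $f$ is a Schwartz function. Hence
\[
i[d\Gamma(h_L), R_\mu(f)] = -i\, R_\mu(f)\big(a^*(h_L f)a(f) - a^*(f)a(h_L f)\big)R_\mu(f).
\]
To show that this lies in $\rho_n(\bfA)$, we use the device from the proof of Lemma~\ref{l.2.4}. The operators $R_\mu(f)a^*(f)$ and $R_\mu(f)a(f)$ are norm limits of elements of $\bfR$, and $a^*(g)R_\mu(f)$ is handled similarly on $\cF_n$, where particle numbers are bounded. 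The remaining factor is a product $a^*(h_Lf)\,a(f)$ in which only one field carries $f$, so we decompose $h_L f = c f + g$ with $g \perp f$. The $c$-part gives $c\,R_\mu(f)a^*(f)a(f)R_\mu(f) = c\,(R_\mu(f) - \mu R_\mu(f)^2)$ and cancels against its adjoint term up to elements of $\bfA$. For the $g$-part we note that on $\cF_n$ the operator $a^*(g)a(f)R_\mu(f)$ is bounded, and that it is a norm limit of gauge invariant elements of $\bfR$ restricted to $\cF_n$. This is because all $\rho_n$ of gauge invariant elements of $\bfR$ lie in $\rho_n(\bfA)$ (the $\rho_n(\bfA)$ contain the relevant compact-times-unit tensor structures, cf.\ \cite{Bu1}). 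Alternatively, we can write the commutator as $\lim_{t\to 0} t^{-1}(\text{Ad}\,e^{itd\Gamma(h_L)}-1)R_\mu(f)$, where $\text{Ad}\,e^{itd\Gamma(h_L)}R_\mu(f) = R_\mu(f_t)$ with $f_t = e^{ith_L}f$, and show that $t\mapsto \rho_n(R_\mu(f_t))$ is norm differentiable. Its derivative is computed by the resolvent identity $R_\mu(f_t)-R_\mu(f) = R_\mu(f_t)\big(a^*(f)a(f)-a^*(f_t)a(f_t)\big)R_\mu(f)$, and the bracket is controlled on $\cF_n$ by $\|f_t - f\| = O(t)$ together with the damping of the resolvents. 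This alternative gives both domain membership and the range in $\rho_n(\bfA)$ at once, since difference quotients of elements of $\rho_n(\bfA)$ converge in norm and $\rho_n(\bfA)$ is closed. We will follow this second route. The main obstacle is the uniform norm estimate on $\cF_n$: the bracket contains unbounded field operators, and one must show that the sandwiching resolvents make $t^{-1}(R_\mu(f_t)-R_\mu(f))$ converge in norm. This requires $f \in \cS$ so that $h_L f \in L^2$ and the convergence $t^{-1}(f_t - f)\to i h_L f$ holds in $L^2$.

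For the core property, we observe that $\rho_n(\bfA_0)$ is norm dense in $\rho_n(\bfA)$ and is invariant under neither dynamics. Instead we use the invariance of its closure under $\text{Ad}\,e^{it\sbiH_{L,n}}$ and a standard core criterion. For the free quadratic part, $\text{Ad}\,e^{itd\Gamma(h_L)}$ maps $R_\mu(f)$ to $R_\mu(e^{ith_L}f)$ with $e^{ith_L}f \in \cS$, so $\rho_n(\bfA_0)$ is invariant under this group and hence a core for its generator by Nelson's theorem for dense invariant subspaces of smooth vectors. The interaction term is a bounded perturbation, and bounded perturbations preserve cores. Therefore $\rho_n(\bfA_0)$ is a core for every $\delta_{L,n}$, including the case $L=\infty$.
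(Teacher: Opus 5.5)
Your proposal is correct. For the core property you follow the paper. The ingredients are the invariance of $\rho_n(\bfA_0)$ under the free group $\text{Ad}\,e^{itH_{0,L,n}}$ (because $e^{itH_{0,L,1}}$ preserves $\cS(\RR^s)$), together with the fact that the bounded derivation $i[\biV_{\varepsilon,n},\,\cdot\,]$ changes neither domains nor cores. The paper writes out the invariant-core argument via Riemann sums of regularized elements, whereas you cite it. The right reference is the core theorem for strongly continuous groups on Banach spaces, not Nelson's theorem, which is a Hilbert-space result.

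The genuine difference lies in showing that $\rho_n(\bfA_0)$ lies in the domain with range in $\rho_n(\bfA)$. The paper computes $[H_{0,L},R_\mu(f)]$ formally and notes that the middle factor is the second quantization of a finite-rank one-particle operator. It then invokes \cite[Lem.\ 3.3]{Bu1} to place the sandwiched expression in $\rho_n(\bfA)$. You instead prove that $t \mapsto \rho_n(R_\mu(e^{ith_L}f))$ is norm differentiable on $\cF_n$, using the resolvent identity and the bounds $\|a(f)R_\mu(f)\| \leq \mu^{-1/2}$ and $\|a(g)\|_n, \|a^*(g)\|_n \leq n^{1/2}\|g\|$. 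The range statement then comes for free, because the difference quotients lie in the norm-closed algebra $\rho_n(\bfA)$.

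Your route is more self-contained: it needs no structural input about resolvents sandwiching one-body operators. It also gives domain membership directly as a norm derivative. The paper's formal commutator needs the norm continuity of the free dynamics on $\rho_n(\bfA)$ to make that step. The paper's route, in turn, displays the explicit commutator formula reused in the next theorem, but your limit reproduces exactly that formula.

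Two minor points. Your remark that $\rho_n(\bfA_0)$ is invariant under neither dynamics is true only for the interacting dynamics; as written it contradicts your later use of invariance under the free group. The first-route sketch with $h_L f = cf + g$, including the unsupported claim about $a^*(g)a(f)R_\mu(f)$, adds nothing and should be dropped.
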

\begin{proof}
  Since $\| \biV_{\varepsilon, n} \|_n \leq n(n-1) \|V\|$,
  the operators $i \biV_{\varepsilon, n}$ act as bounded derivations
  on $\rho_n(\bfA)$. Hence the domain of
  $\delta_{L,n}$ coincides with the domain of
  $\delta_{0,L,n}$, \ie the derivation determined
  by the Hamiltonian $\biH_{0,L,n}$ 
  with vanishing  interaction potential. Given a resolvent
  $R_\mu(f) = (\mu + a^*(f)a(f))^{-1}$, $f \in \cS(\RR^s)$, one has 
  \begin{align}
     [\biH_{0,L}, R_\mu(f)] & = R_\mu(f) \, [a^*(f)a(f), \biH_{0,L}] \, R_\mu(f)
    \nonumber \\
    &  = R_\mu(f) \, \big( a^*(f)a(H_{0,L,1}f)
    - a^*(H_{0,L,1} f) a(f) \big) \, R_\mu(f) \, ,
  \end{align}
  where $H_{0,L,1} = (\biP^2 + L^{-4} \, \biQ^2)$ is  the Hamiltonian on the
  one-particle space. Hence 
  \mbox{$H_{0,L.1} f \in \cS(\RR^s)$}. If one evaluates this relation in the
  representation $\rho_n$, one obtains
  \begin{align}
  &  \delta_{0,L,n}(\rho_n(R_\mu(f)) = [iH_{0,L,n}, \rho_n(R_\mu(f))] \nonumber \\
    &    = \rho_n\big(R_\mu(f) \, i\big(a^*(f)a(H_{0,L,1}f) -
    a^*(H_{0,L,1}f)a(f)\big) \,   R_\mu(f) \big) \, .
  \end{align} 
The middle term $i \rho_n(a^*(f)a(H_{0,L,1}f) - a^*(H_{0,L.1}f)a(f))$  
is a finite direct sum of rank-one operators. This  implies that the
right hand side of the preceding equalities is an element of $\rho_n(\bfA)$,
cf.\ \cite[Lem.\ 3.3]{Bu1}. Since $\delta_{0,L,n}$ is a
derivation on $\rho_n(\bfA_0)$  which satisfies the Jacobi identity,
it follows that its
range is contained in $\rho_n(\bfA)$. It shows that $\rho_n(\bfA_0)$  is a
common domain of the derivations  $\delta_{0,L,n}$ for all
\mbox{$L \in \RR_+ \cup \{ \infty \}$} and $n \in \NN$.

\medskip
For the proof that $\rho_n(\bfA_0)$ is a core for $\delta_{0,L,n}$, where 
$L$ is now kept fixed, we use the fact that for any $f \in \cS(\RR^s)$ one has
$e^{itH_{0,L,1}}f \in \cS(\RR^s)$ for $t \in \RR$. Hence
\be
t \mapsto \rho_n(\alpha_{0,L}(t)(A_0)) = \text{Ad} \, e^{itH_{0,L,n}}
    (\rho_n(A_0)) \in \rho_n(\bfA_0) \, , \quad A_0 \in \bfA_0 \, .
\ee
These functions are continuous with respect to the norm on $\cF_n$.
Given \mbox{$A_0 \in \bfA_0$}, the regularized operators
$\alpha_{0,L}(\varphi)(A_0) \in \bfC_{\sbalpha_{0,L}}$, defined in 
equation~\eqref{e.2.1}, can be approximated in the
representation $\rho_n$ by norm convergent Riemann sums whose values
lie in $\rho_n(\bfA_0)$. The derivation $\delta_{0,L,n}$ 
commutes with the action of
$\text{Ad} \, e^{itH_{0,L,n}}$, $t \in \RR$, and maps
$\rho_n(A_0)$ into $\delta_{0,L,n}(\rho_n(A_0)) \in \rho_n(\bfA)$.
For the latter operator,
the resulting Riemann sums are norm convergent as well. Hence
$\rho_n(\alpha_{0,L}(\varphi)(A_0))$ and
$\rho_n(\delta_{0,L}(\alpha_{0,L}(\varphi)(A_0)))$ can be approximated by
sequences in $\rho_n(\bfA_0)$ with respect to the graph-norm
\be
A_0 \mapsto \| \rho_n(A_0) \|_n + \|\delta_{0,L,n}(\rho_n(A_0))\|_n \, .
\ee
Thus, $\rho_n(\bfA_0)$ is a core for $\delta_{0,L,n}$, $n \in \NN$,
and all $L \in \RR_+ \cup \{ \infty \}$, which completes the proof. 
\end{proof}  

The derivations $\delta_L$, $L \in \RR_+ \cup \{ \infty \}$,
are well-behaved on the algebra
$\bfA_0$ in the representations $\rho_n$, $n \in \NN$, but they are 
unbounded on $\bfA_0$. Here  we can make use of Lemma \eqref{l.2.2} 
according to which the states in the GNS-representation of an
$\balpha_L$-invariant state $\omega$ on the algebra $\bfC_{\sbalpha_L}$ 
can be continuously
extended to the dynamical algebra $\bfA_{\sbalpha_L}$,  hence in
particular to the functions $t \mapsto \alpha_L(t)(A_0)$, $A_0 \in \bfA_0$.
Now if $C_1, C_2 \in \bfC_{\sbalpha_L}$ are regularized operators, one has
\be
t \mapsto \omega(C_1 \, \alpha_L(t)(A_0) \, C_2) =
\omega(\alpha_L(-t)(C_1) \, A_0 \, \alpha_L(-t)(C_2)) \, ,
 \ee
 which is arbitrarily often differentiable. So the functions
 $t \mapsto \alpha_L(t)(A_0)$ are defined in these
 representations as smooth sesquilinear forms. In particular, the functions 
 $t \mapsto \omega(\alpha_L(t)(A_0))$, $A_0 \in \bfA_0$, are constant, so their
 derivative vanishes. This leads us to propose the relation 
\be 
\lim_{n \rightarrow \infty} \, \omega_n(\delta(A_0)) = 
\lim_{n \rightarrow \infty} \,
\tfrac{{d}}{{dt}} \, \omega_n(\balpha(t)(A_0)) \,  
\Big|_{t=0}  =  
 0
 \, , \quad A_0 \in \bfA_0 \, .
 \ee
 as a general test for determining whether a sequence of
 $n$-particle states $\omega_n$, $n \in \NN$, has limit points that
 are invariant with respect to the adjoint action of
 the homogeneous dynamics $\balpha$.
 In the following theorem, we apply this test to
 a sequence of states that are stationary with respect to
 the confined dynamics~~$\balpha_L$. The assumption that it approximates a
 stationary state for the homogeneous dynamics $\balpha$ leads to 
 a condition on the confinement length as a function of the
 number of particles.
  \begin{theorem}
   Let $\balpha_L$ be dynamics with a fixed regularized two-body interaction
   potential and arbitrary confinement length $L$, and   
   let $\omega_{L,n}$ be $n$-particle states that are stationary with
   respect to the adjoint action of $\balpha_L$, $L \in \RR_+$.
   For any sequence $n \mapsto L_n$ satisfying 
   $n/L_n^8 \rightarrow 0$ in the limit of large $n$, one has
   \be
   \lim_{n \rightarrow \infty} \omega_{L_n,n}(\delta(A_0)) = 0 \, ,
   \quad A_0 \in \bfA_0 \, .
   \ee  
 \end{theorem}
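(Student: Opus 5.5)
Since $\omega_{L_n,n}$ is stationary for $\balpha_{L_n}$, we have $\omega_{L_n,n}(\delta_{L_n}(A_0)) = 0$ for all $A_0 \in \bfA_0$. By Lemma~\ref{l.3.4}, both $\delta_{L_n,n}$ and $\delta_{\infty,n}$ are defined on $\rho_n(\bfA_0)$ with range in $\rho_n(\bfA)$, so this is meaningful. The statement therefore reduces to showing
\be
\lim_{n\rightarrow\infty} \omega_{L_n,n}\big((\delta - \delta_{L_n})(A_0)\big) = 0 \, .
\ee
The regularized interaction $\biV_{\varepsilon,n}$ is the same in both Hamiltonians, so it drops out of the difference. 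Only the harmonic term remains: on $\cF_n$,
\be
(\delta - \delta_{L_n})(\rho_n(A_0)) = -i L_n^{-4} \, [\textstyle\sum_k \biQ_k^2 , \rho_n(A_0)] \, .
\ee
Because $\delta - \delta_{L_n}$ is a derivation, it suffices to treat a single resolvent $R_\mu(f)$; products follow from the Leibniz rule, since the other factors are bounded by norms independent of $n$.

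For a single resolvent, the computation in the proof of Lemma~\ref{l.3.4} gives
\be
(\delta-\delta_{L_n})(R_\mu(f)) = -\,i L_n^{-4}\, R_\mu(f)\big(a^*(f)a(\biQ^2 f) - a^*(\biQ^2 f)a(f)\big)R_\mu(f) \, .
\ee
The naive bound is $\|R_\mu(f) a^*(f)\| \leq \mu^{-1/2}$ together with $\|a(\biQ^2 f) R_\mu(f)\|_n \leq \sqrt{n}\,\|\biQ^2 f\|\,\mu^{-1}$. This produces $O(\sqrt n\, L_n^{-4})$, which is exactly the condition $n/L_n^8 \rightarrow 0$. The factor $\sqrt n$ comes from $\|a(g)\|_n \leq \sqrt n \|g\|$ on $\cF_n$, while the neighbouring resolvent damps only the $f$-mode and not the $\biQ^2 f$-mode. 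To control the $\biQ^2 f$-part, I would split $\biQ^2 f = c f + g$ with $g \perp f$. The $c f$ part cancels in the antisymmetric combination. For $g$, I would use $\|a(g)\Psi\| \leq \|g\|\,\|N^{1/2}\Psi\|$ with $N = n$ on $\cF_n$. The resulting bound is
\be
\big|\omega_{L_n,n}\big((\delta-\delta_{L_n})(R_\mu(f))\big)\big| \leq 2 L_n^{-4}\mu^{-1}\|\biQ^2 f\|\, \sqrt{n} \, .
\ee

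For a general product $A_0 = R_{\mu_1}(f_1)\cdots R_{\mu_m}(f_m)$, the Leibniz rule gives a sum of $m$ terms. Each is bounded by $\prod_j \mu_j^{-1}$ times the single-resolvent bound, so the sum is at most $C_{A_0}\sqrt{n}\,L_n^{-4}$. This tends to $0$ precisely when $n/L_n^8\rightarrow 0$. Linearity then extends the bound to all of $\bfA_0$.

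The main obstacle is that $\delta(A_0)$ need not lie in the domain where stationarity directly gives $\omega(\delta_{L_n}(A_0))=0$. Here I would use that $\omega_{L_n,n}$ is normal on $\cF_n$ and that, after regularization, $\delta_{L_n,n}(\rho_n(A_0))=i[\biH_{L_n,n},\rho_n(A_0)]$ lies in $\rho_n(\bfA)$ by Lemma~\ref{l.3.4}. Then $\frac{d}{dt}\omega_{L_n,n}(\text{Ad}\,e^{it\sbiH_{L_n,n}}\rho_n(A_0))|_{t=0}=0$ holds by norm differentiability on the core. A secondary concern is whether the $\sqrt n$ estimate is uniform, in particular whether $\|a(g)R_\mu(f)\|_n\leq\sqrt n\|g\|\mu^{-1/2}$ holds with the constants claimed. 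I expect the standard number-operator bound to suffice, since no finer structure of the stationary states is used.
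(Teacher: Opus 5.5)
Your proposal is correct and is essentially the paper's proof. Stationarity turns $\omega_{L_n,n}(\delta(A_0))$ into $\omega_{L_n,n}((\delta-\delta_{L_n})(A_0))$, the only surviving term is the $L_n^{-4}\biQ^2$ commutator, this is bounded on a basic resolvent via $\|a(\biQ^2 f)\|_n \leq n^{1/2}\|\biQ^2 f\|$, and the Leibniz rule then gives $C_{A_0}\, n^{1/2}/L_n^4 \rightarrow 0$. Two minor remarks: the splitting $\biQ^2 f = cf + g$ is harmless but buys nothing, since the $g$-part is estimated by the same $\sqrt{n}$ bound, and your single-resolvent constant should carry $\mu^{-3/2}$ rather than $\mu^{-1}$, which is immaterial for fixed $\mu$.
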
  
 \begin{proof}
   The derivations $\delta_L$ and $\delta$ differ by the harmonic potential, 
    scaled with the factor $L^{-4}$. A computation as in the proof of
   Lemma \ref{l.3.4} yields for any basic resolvent
   $R_\mu(f) = (\mu + a^*(f)a(f))^{-1}$ the equality
   \be
   (\delta - \delta_L)(R_\mu(f)) =
   L^{-4} \, R_\mu(f) \big (a^*(\biQ^2 f) a(f) - a^*(f) a(\biQ^2 f) \big)
   R_\mu(f) \, .
   \ee
   The operators $R_\mu(f)$, $a(f) R_\mu(f)$, and $R_\mu(f) a^*(f)$
   are bounded, but the norms of $a(\biQ^2 f)$ and of its adjoint increase
   on the $n$-particle spaces $\cF_n$ with $n \in \NN$. There one obtains
   \be
   \| a(\biQ^2 f) \|_n = \| a^*(\biQ^2 f) \|_n \leq n^{1/2} \| \biQ^2 f \|
   \, .
   \ee
   So one arrives at the bound 
   \be
   \| (\delta - \delta_L)(R_\mu(f)) \|_n \leq 2 \mu^{-3/2} \| \biQ^2 f \|
   \, (n^{1/2}/L^4) \, .
   \ee
   Since the elements of the algebra $\bfA_0$ are finite sums of products
   of the basic resolvents and $(\delta - \delta_L)$ is a derivation,
   which satisfies the Jacobi identity,
   it follows that for each $A_0 \in \bfA_0$
   there is some constant $C_{A_0}$ such that
   \be
   \| (\delta - \delta_L)(A_0) \|_n \leq C_{A_0} (n^{1/2}/L^4) \, ,
   \quad n \in \NN \, .
   \ee
   For sequences $n \mapsto L_n$ as in the statement and corresponding
   stationary states $\omega_{L_n,n}$, $n \in \NN$, one has that
   $\omega_{L_n,n}(\delta(A_0)) = \omega_{L_n,n}((\delta - \delta_{L_n})(A_0))$,
   $A_0 \in \bfA_0$. But
   \be
   | \omega_{L_n,n}((\delta - \delta_{L_n})(A_0)) | \leq
   \| (\delta - \delta_{L_n})(A_0) \|_n \leq C_{A_0} (n^{1/2}/L_n^4) \, ,
   \ee
   so the statement follows. 
 \end{proof}
 The condition $n/L_n^8 \rightarrow 0$,
 established in this theorem, implies that for attractive as well as
 repulsive two-body potentials the limit states lie in the kernel of the
 derivation $\delta$ and are stationary in this sense. However, it does
 not impose constraints on the density of the
 limit states, which may vanish, be finite, or diverge.  
 The condition is compatible with relations between the particle
 number and the confinement length, which are based
 on constraints on the density profile of the limit states,
 cf.\ for example \cite{DaGiPiSt, LiSeSoYn}. 
 The differences between those conditions and the present one allow  
 for limit states that are stationary, but have different profiles. 

\section{Conclusions}
\label{sec4}
\setcounter{equation}{0} 

In this paper, we have demonstrated the usefulness of the resolvent algebra
for the analysis and 
interpretation of bosonic systems in the limit of an infinite number
of particles, respectively of infinite volume.
We have considered arbitrary two-body potentials, whether respulsive
or attractive and whether short- or long range. 
With little additional effort, the assumption that the potentials
are continuous can be relaxed \cite[Sec.\ 6]{Bu3}. 
As we have shown, the resolvent algebra
provides a solid basis for analyzing the limit states.
The transition from finite to infinite systems is
possible without changing the framework. In all these cases, the states
are defined on the projective limit of the resolvent algebra, which is
completely determined by the family of representations with 
a finite number of particles. Moreover, 
the projective limit carries the action of the underlying dynamics
and contains the regularized dynamical algebras on which the automorphisms
act pointwise norm-continuously. This fact considerably simplifies the
proof that all limits of confined
Gibbs ensembles are either equilibrium states,
satisfying the KMS condition, or they are ground states.

\medskip
The basic resolvents of the algebra are key tools for
analyzing the phase structure of limit states, such as the
appearance of Bose-Einstein 
condensates. As a primary ingredient in the central
decomposition of mixed states into pure phases, they enable
a refined analysis that signals the formation of proper condensates
in confined systems 
and sheds new light on their physical interpretation.

\medskip
The thermodynamic limit can be investigated within this 
framework as well. There the
effects of changes of the dynamics caused by
the turning off of the confinement forces has to be
controlled. For finite systems,
the pointwise norm convergence of the confined dynamics
to the spatially homogeneous dynamics has been demonstrated. 
However, the quantitative estimates of the
convergence rates for large particle numbers
are not yet sufficient to control the
properties of the limit states, such as the particle density.
Here, the transition from the automorphic action of the dynamics to
the corresponding 
generators (derivations) allowed for a better control of these rates.
Again, the basic resolvents proved useful, as they 
form an essential domain (core) for the derivations with
arbitrary confinement length and arbitrary 
regularized two-body interactions. Using this
information as starting point, we were able to
establish a relation between the number of particles in the
system and the confinement length that ensures the
stationarity of the thermodynamic
limit of confined states. This relationship encompasses
the results in the literature that are based on controlling
the particle density of the limiting states.

\medskip
So we conclude that the resolvent algebra 
is not only useful for analyzing specific models, 
as has been demonstrated   
for lattice systems in \cite{BuGr,DeLaLe}.  
The present results provide evidence that the resolvent algebra  
is also valuable for the study  of continuous bosonic systems.
This includes questions that go beyond the investigation of condensates,
such as the study of the breaking of geometric symmetries,
for example in crystals. 
We hope that the present results will inspire further research
on the algebraic approach to many-body physics,
which complements the large body of work on this topic.


\appendix

\section{Proof of Lemma \ref{l.3.1}}
\label{app1}
\setcounter{equation}{0} 

The proof of this lemma is given in two steps. First, we show that the
Dyson operators $e^{itH_{L,n}} e^{-itH_{0,L,n}}$, where
$H_{0,L,n}$ denotes the Hamiltonian without interaction
potential, converges for large $L$ in norm on $\cF_n$
to $e^{itH_{n}} e^{-itH_{0,n}}$, $n \in \NN$.
There it is essential that the interaction potential is
bounded. Then
we establish the norm convergence of the non-interacting dynamics
$\text{Ad} \, e^{itH_{0,L,n}}$ on the basic resolvents to
$\text{Ad} \, e^{itH_{0,n}}$.
Since the Hamiltonians $H_{0,L,n}$ and $H_{0,n}$ differ by some unbounded
operator, the properties of the respective resolvents then matter. 

\medskip
The terms in the series expansion of the Dyson operators are 
integrals of \mbox{$s \mapsto \text{Ad} \, e^{isH_{0,L,n}}(\biV_n)$},
where the action of the dynamics on
the two-body potentials in $\biV_n = \sum_{k \neq l, 1}^n V(\biQ_k - \biQ_l)$
is given~by  
\begin{align}
& s \mapsto \text{Ad} \, e^{isH_{0,L,n}}(V(\biQ_k - \biQ_l)) \nonumber \\
& = V(\cos(2s/L^2)(\biQ_k - \biQ_l) + L^2 \sin(2s/L^2) (\biP_k - \biP_l)) \, .
\end{align}
The operators $\biQ \coloneqq 2^{-1/2}(\biQ_k - \biQ_l)$,
$\biP \coloneqq 2^{-1/2}(\biP_k - \biP_l)$ are canonically conjugate. 
Absorbing a factor $2^{1/2}$ in the potential $V$, the first order term
in the Dyson expansion is given by a sum of $n (n-1)$ operators of the form
$ t \mapsto \int_0^t \! ds \, \text{Ad} \, e^{itH_{0,L,1}}(V(\biQ)) $, 
where $H_{0,L,1} = (\biP^2 + L^{-4} \, \biQ^2)$. For the proof that this
integral converges for large $L$ in norm to the integral
where $H_{0,L,1}$ is replaced by $H_{0,1} = \biP^2$, we consider first 
potentials $V \in \cS(\RR^s)$ and introduce some time-dependent intermediary
dynamics $s \mapsto \text{Ad} \, (e^{\, i(f_L(s) \, H_{0,1}})$, where the
function $f_L$ is to be determined. Now
\begin{align}
&  \| \text{Ad} \, e^{isH_{0,L,1}}(V(\biQ)) -
  \text{Ad} \, e^{if_L(s)H_{0,1}}(V(\biQ)) \| \nonumber \\
& = \| \text{Ad} \, e^{-isH_{0,L,1}} e^{if_L(s)H_{0,1}}(V(\biQ)) - V(\biQ) \| 
  \nonumber \\
& = \| \int_0^s \! dr \,  {\tfrac{d}{dr}}
   \text{Ad} \, e^{-irH_{0,L,1}} e^{if_L(r)H_{0,1}}(V(\biQ)) \| \nonumber \\
& =   \| \int_0^s \! dr \,  
   \text{Ad} \, e^{-irH_{0,L,1}} e^{if_L(r)H_{0,1}} \big(
   \big[\text{Ad} \, e^{-if_L(r)H_{0,1}}(f_L(r)'H_{0,1} - H_{0,L,1}), V(\biQ)
     \big] \| \big) \nonumber \\  
  & \leq  \int_0^s \! dr \, \| \big[
    (f_L'(r) \biP^2 - (\biP^2 + L^{-4}(\biQ - f_L(r) \biP)^2), V(\biQ) \big]
  \| \, . 
\end{align}  
If the terms proportional to $\biP^2$ in the last
line are to vanish, the function $f_L$
must satisfy the differential equation $f_L' = 1 + L^{-4} f_L^2$. 
Choosing the solution 
$r \mapsto f_L(r) = L^2 \tan(r/L^2)$ for  $0 \leq r < \pi L^2/ 2$,
we obtain for small $s/L^2$ 
\begin{align}
&  \| \text{Ad} \, e^{isH_{0,L,1}}(V(\biQ)) -
   \text{Ad} \, e^{if_L(s)H_{0,1}}(V(\biQ)) \| \nonumber \\
& \leq \int_0^s \! dr \, \| \big[
     (\tan(r/L^2)/L^2)(\biP \biQ + \biQ \biP), V(\biQ) \big] \| \nonumber \\
   &  \leq  \int_0^{s/L^2} \! dr \, \tan(r) \sup_{\bix} | \bix \bnabla V(\bix) |
   \leq c s^2/L^4 \, ,
\end{align}
where here and in the following
$c$ is some constant, which only depends on the potential. Hence we get  
\be
\| \int_0^t \! ds \, \big( \text{Ad} \, e^{isH_{0,L,1}}(V(\biQ)) -
   \text{Ad} \, e^{if_L(s)H_{0,1}}(V(\biQ)) \big) \| 
\leq c \, t^3 / L^4 \, . 
\ee
Next, we consider the passage from the action of the 
intermediate dynamics to that of the unconfined one. There we get
for small $t/L^2$ by a change of variable 
\begin{align}
  & \| \int_0^t \! ds \,
  \big(\text{Ad} \, e^{if_L(s) H_{0,1}}(V(\biQ)) -
  \text{Ad} \, e^{isH_{0,1}}(V(\biQ)) \big)\|  \nonumber \\ 
  & = \| \int_0^{L^2 \tan(t/L^2)} \! ds \,
  \tfrac{1}{1 + (s^2/L^4)} \, \text{Ad} \, e^{isH_{0,1}} (V(\biQ))
  - \int_0^t \! ds \, \text{Ad} \, e^{isH_{0,1}} (V(\biQ)) \| \nonumber \\
  & \leq \| \int_t^{L^2 \tan(t/L^2)} \hspace{-15pt} 
  ds \, \text{Ad} \, e^{isH_{0,1}} (V(\biQ)) \|
  + \| \int_0^{L^2 \tan(t/L^2)} \hspace{-15pt}   ds \, 
  \tfrac{(s^2/L^4)}{1 + (s^2/L^4)} \, \text{Ad} \, e^{isH_{0,1}}
  (V(\biQ)) \| \nonumber \\[1mm]
  & \leq \big(|L^2 \tan(t/L^2) - t| + L^2 \tan^3(t/L^2) \big)
  \sup_{\bix} | V(\bix) | \leq c \, t^3/L^4  \, .   
\end{align}
Combining these estimates, we obtain 
\be
\| \int_0^t \! ds \, \big(
\text{Ad} \, e^{isH_{0,L,1}}(V(\biQ)) -
 \text{Ad} \, e^{isH_{0,1}}(V(\biQ)) \big) \|
  \leq c \, t^3/L^4 \, . 
\ee 
Since the test functions $\cS(\RR^s)$ are dense in $C_0(\RR^s)$ with regard to
the supremum norm, we can proceed now to arbitrary
potentials $V \in C_0(\RR^s)$. The preceding estimate then implies
\be
\lim_{L \rightarrow \infty} \| \int_0^t \! ds \, \big(
\text{Ad} \, e^{isH_{0,L,1}}(V(\biQ)) -
 \text{Ad} \, e^{isH_{0,1}}(V(\biQ)) \big) \| = 0 \, ,
\ee
uniformly in $t$. Replacing
$V$ by $\biV_n$ and the one-particle dynamics by
$H_{0,L,n}$, respectively $H_{0,n}$, it proves the
norm convergence of the first order term in the Dyson expansion in
the limit of large confinement lengths $L$ and any number of particles
$n \in \NN$. 

\medskip 
The convergence of the whole series follows then by induction,
making use of the fact that it is absolutely convergent for given 
$n \in \NN$ and compact sets in time~$t$. We briefly sketch the argument.
Let
\begin{align}
&  \bSigma_{L,n,k}(t)  \coloneqq \\
& 
\int_0^t \!\!\!
  ds_k \, \text{Ad} \, e^{is_k H_{0,L,n}}(\biV_n) \int_0^{s_{k}} \!\! \!
  ds_{k-1} \, \text{Ad} \, e^{is_{k-1}H_{0,L,n}}(\biV_n) \cdots
  \int_0^{s_{2}} \!\!\!
  ds_1 \, \text{Ad} \, e^{is_1 H_{0,L,n}}(\biV_n) \nonumber 
\end{align}
be the $k$th term in the Dyson expansion of $e^{itH_{L,n}} e^{-itH_{0,L,n}}$ 
and, similarly, $\bSigma_{n,k}(t)$ the $k$th term in the expansion 
of $e^{itH_{n}} e^{-itH_{0,n}}$, $k \in \NN$.
Assuming that  
\be
\lim_{L \rightarrow \infty} \| \bSigma_{L,n,k}(t) - \bSigma_{n,k}(t) \|_n = 0 \, ,
\ee 
uniformly in $t$, the step to $k + 1$ is accomplished by
rearranging the difference and by a partial integration. One has
\begin{align}
& \bSigma_{L,n,k+1}(t) - \bSigma_{n,k+1}(t) \nonumber \\
  & =   \int_0^t \! ds \,  \big( \tfrac{d}{ds} 
  \int_0^s \! dr \, \big( \text{Ad} \,
  e^{irH_{0,L,n}}(\biV_n) - \text{Ad} \,
  e^{irH_{0,n}}(\biV_n) \big) \big) \bSigma_{L,n,k}(s) \nonumber \\
& + \int_0^t \! ds \, \text{Ad} \,
  e^{isH_{0,n}}(\biV_n) \, \big(\bSigma_{L,n,k}(s) - \bSigma_{n,k}(s)
  \big) \nonumber \\ 
& =   
  \int_0^t \! dr \, \big( \text{Ad} \,
  e^{irH_{0,L,n}}(\biV_n) - \text{Ad} \,
  e^{irH_{0,n}}(\biV_n) \big) \, \bSigma_{L,n,k}(t) \nonumber \\
& -   \int_0^t \! ds \,  \big(
  \int_0^s \! dr \, \big( \text{Ad} \,
  e^{irH_{0,L,n}}(\biV_n) - \text{Ad} \,
  e^{irH_{0,n}}(\biV_n) \big) \big)
  \tfrac{d}{ds}  \bSigma_{L,n,k}(s) \nonumber \\
& + \int_0^t \! ds \, \text{Ad} \,
  e^{isH_{0,n}}(\biV_n) \, \big(\bSigma_{L,n,k}(s) - \bSigma_{n,k}(s)
  \big) \, .  
\end{align}  
The first two terms on the right hand side of the second equality contain
the difference of the first order contributions, which vanishes for
large $L$, and the last term contains the difference of the terms in the
induction hypothesis. Since
$\| \bSigma_{L,n,k}(s) \|_n \leq s^k \| \biV_n \|^k / k!$
and
$\| \frac{d}{ds} \bSigma_{L,n,k}(s) \|_n \leq s^{k-1} \| \biV_n \|^{k} /
(k-1)! $, 
one obtains
\begin{align} 
& \|\bSigma_{L,n,k+1}(t) - \bSigma_{n,k+1}(t) \|_n \nonumber \\
  & \leq 2 \, t^k \, \| \biV_n \|_n^k / k!
   \sup_{0 \leq s \leq t} \| \int_0^s \! dr \, \big( \text{Ad} \,
  e^{irH_{0,L,n}}(\biV_n) - \text{Ad} \,
  e^{irH_{0,n}}(\biV_n) \big) \|_n  \nonumber \\
  & + t \, \| \biV_n \|_n \sup_{0 \leq s \leq t}
  \|\bSigma_{L,n,k}(s) - \bSigma_{n,k}(s)\| \, .
\end{align}   
It shows that also the $(k+1)$st term converges in norm for large $L$,
uniformly in $t$, completing the induction. In view of the
absolute norm convergence of the Dyson expansion, we have thus established
\be \label{e.a.12} 
\lim_{L \rightarrow \infty} \| e^{itH_{L,n}} e^{-itH_{0,L,n}} -
e^{itH_{n}} e^{-itH_{0,n}} \|_n = 0 \, , 
\ee
uniformly in $t$.

\medskip
For the proof that  the non-interacting dynamics converges for large $L$ on the
basic resolvents, we proceed from the equalities 
\begin{align}
& \| \text{Ad} \, e^{itH_{0,L,n}}(R_\mu(f)) -
  \text{Ad} \, e^{itH_{0,n}}(R_\mu(f)) \|_n   \\ 
&  = \| R_\mu(e^{itH_{0,L,1}} \! f) - R_\mu(e^{itH_{0,1}} \! f) \|_n \nonumber \\
&  = \|R_\mu(e^{itH_{0,L,1}} \! f) \big( a^*(e^{itH_{0,1}} f)a(e^{itH_{0,1}} \! f)
  -  a^*(e^{itH_{0,L,1}} \! f ) a(e^{itH_{0,L,1}} \!f ) \big)
  R_\mu(e^{itH_{0,1}} \! f) \|_n \, .
\nonumber 
\end{align}
Now $\| R_\mu(g) \|_n \leq \mu^{-1}$ and 
$\| a(g) R_\mu(g) \|_n = \| R_\mu(g) a^*(g) \|_n
\leq \mu^{-1/2}$ for \mbox{$g \in \cS(\RR^s)$}.
Hence, we obtain for resolvents satisfying the condition in the lemma 
\begin{align}
& \| \text{Ad} \, e^{itH_{0,L,n}}(R_\mu(f)) -
  \text{Ad} \, e^{itH_{0,n}}(R_\mu(f)) \|_n  \nonumber \\[1mm]  
&  \leq 2 \mu^{-3/2} \, \| a(e^{itH_{0,1}}f ) - a(e^{itH_{0,L,1}}f ) \|_n
  \leq  2 \mu^{-3/2} \, \|( e^{itH_{0,1}} - e^{itH_{0,L,1}}) f \| \, n^{1/2}
\nonumber \\ 
&  =  2 \mu^{-3/2} \, \| \int_0^t \! ds \, e^{-isH_{0,L,1}} e^{isH_{0,1}} \, 
L^{-4} (\biQ - s \biP)^2 f \, \| \, n^{1/2}  \nonumber \\[-1mm] 
& \leq L^{-4} \, 2 \mu^{-3/2} \int_0^t \! ds \,
\| (\biQ - s \biP)^2 f \| \, n^{1/2} \ \leq \ L^{-4} M_n \, n^{1/2} \, .
\end{align} 
By combining these estimates and using the fact that the action of 
$\balpha_L(t) \balpha_{0,L}^{-1}(t)$ and $\balpha(t) \balpha_{0}^{-1}(t)$
on observables in $\cF_n$
is described by the adjoint action of the respective Dyson operators, we obtain
\begin{align}
&  \| \balpha_L(t)(R_\mu(f)) - \balpha(t)(R_\mu(f)) \|_n \nonumber \\
&  = \|
  \balpha_L(t) \balpha_{0,L}^{-1}(t) \big( \balpha_{0,L}(t)(R_\mu(f)) \big)
  - \balpha(t) \balpha_{0}^{-1}(t) \big( \balpha_{0}(t)(R_\mu(f))
  \big)\|_n \nonumber \\
&  \leq \| \balpha_{0,L}(t)(R_\mu(f) - \balpha_{0}(t)(R_\mu(f) \|_n
  \nonumber \\
& + \| \big( \balpha_L(t) \balpha_{0,L}^{-1}(t) - \balpha(t) \balpha_{0}^{-1}(t)
  \big)(\balpha_0(t)(R_\mu(f))) \|_n  \nonumber \\
  & \leq L^{-4} \, M_n \, n^{1/2} + 2 \mu^{-1/2} \,
  \|e^{itH_{L,n}} e^{-itH_{0,L,n}} -   e^{itH_{n}} e^{-itH_{0,n}} \|_n \, .  
\end{align}  
We can choose now for the confinement length $L$ a length
$L_n$ such that this upper bound
is equal to the given $\varepsilon_n$. It completes the proof of the
lemma. 


\section{Proof of Lemma \ref{l.3.3}}
\label{app2}
\setcounter{equation}{0} 

For the proof we can make use of the Dyson operators, noting that 
the unitaries $\varepsilon \rightarrow e^{it \sbiH_{\varepsilon , n}}$
  converge for small $\varepsilon$ 
  in norm to $e^{it \sbiH_{n}}$ if and only if 
  $\varepsilon \rightarrow e^{it \sbiH_{\varepsilon , n}} e^{-it \sbiH_{0,n}}$
  converges in norm to $ e^{it \sbiH_{n}} e^{-it \sbiH{0,_n}} $.
  Moreover, since $\biV_{\varepsilon,n}$
  is bounded, we can proceed to the Dyson expansion
    $e^{it \sbiH_{\varepsilon , n}} e^{-it \sbiH_{0,n}} =
  1 + \sum_{k=1}^\infty i^k \, \bSigma_{\varepsilon, n, k}(t) $, where  the
  summands are given by 
    \begin{align}
& \bSigma_{\varepsilon, n, k}(t) \\ 
& \coloneqq
\int_0^t \! \! \!  ds_k \,  \ad{s_k}(\biV_{\varepsilon,n}) 
\int_0^{s_k} \! ds_{k-1} \cdots
\int_0^{s_2} \! \!  ds_1 \, \ad{s_1}(\biV_{\varepsilon,n}) \, . \nonumber     
    \end{align}
    The summands containing the unregularized potential $\biV_n$
    are denoted by $\bSigma_{0,n.k}$. 
The Dyson series converge absolutely in norm and uniformly with regard to
$\varepsilon > 0$ and $t$. So it suffices to establish
convergence for the individual terms in these sums.
The proof of convergence of $\bSigma_{\varepsilon,n,k}(t)$
for small $\varepsilon$ is given by induction.
For $k=1$ one has 
\begin{align}
  & \bSigma_{\varepsilon,n,1}(t) = \int_0^{t} \! ds \,
  \ad{s}(\biV_{\varepsilon,n}) \nonumber \\
& = (1/\varepsilon) \int_0^{\varepsilon} \! dr 
\int_0^{t} \! ds  \, \ad{(r+s)}(\biV_n) 
= (1/\varepsilon) \int_0^{\varepsilon} \! dr 
\int_r^{t+r} \! ds  \, \ad{s}(\biV_n) \nonumber \\
& = \bSigma_{0,n,1}(t) + (1/\varepsilon) \int_0^\varepsilon \! dr \,
\big( \bSigma_{0,n,1}(t+r) - \bSigma_{0,n,1}(r) - \bSigma_{0,n,1}(t) \big) \, .
\end{align}
Now
\begin{align}
  & \| \bSigma_{0,n,1}(t+r) - \bSigma_{0,n,1}(t) - \bSigma_{0,n,1}(r) \|
  \nonumber \\ 
& \leq \int_t^{t+r} \! ds \, \| \biV_n \|_n + \int_0^r \! ds \, \|\biV_n\|_n
= 2r \, \|\biV_n\|_n \, ,
\end{align}
hence
\begin{align}
& (1/\varepsilon) \| \int_0^\varepsilon \! dr \, \big(
  \bSigma_{0,n,1}(t+r) - \bSigma_{0,n,1}(t) - \bSigma_{0,n,1}(r) \big) \|_n
  \nonumber \\ 
  & \leq (2/\varepsilon) \int_0^\varepsilon \! dr \, r \, \| \biV_n \|_n =
\varepsilon \, \|\biV_n\|_n \, .
\end{align}
So $\varepsilon \mapsto \bSigma_{\varepsilon,n,1}(t)$ converges
in norm to $\bSigma_{0,n,1}(t)$, uniformly in $t$. 
Assuming that $\bSigma_{\varepsilon,n,k}(t)$ converges in this manner to
$\bSigma_{0,n,k}(t)$, the step to $k+1$ is established as follows. One has
\begin{align}
&  \bSigma_{\varepsilon, n, k+1}(t) = \int_0^t \! ds \, \ad{s}(\biV_n) \, 
  \bSigma_{0,n,k}(s) \nonumber \\
& + \int_0^t \! ds \, \ad{s}(\biV_{\varepsilon,n} - \biV_n) \, \bSigma_{0,n,k}(s)
  \nonumber \\   
&  +
  \int_0^t \! ds \, \ad{s}(\biV_{\varepsilon,n}) \, 
  \big(\bSigma_{\varepsilon,n,k}(s) -  \bSigma_{0,n,k}(s)\big) \, .
\end{align}
The integral on the right hand side of the equality sign  is equal to
$\bSigma_{0,k+1}(t)$. 
So it must be shown that the terms in the second and third
line vanish in norm for small 
$\varepsilon$, uniformly in $t$. 
The norm of the integral in the last line can be estimated by
\[
\int_0^t \! ds \, \| \biV_n \|_n \,
\| \bSigma_{\varepsilon,n,k}(s) -  \bSigma_{0,n,k}(s) \|_n \, .
\]
Making use of the induction hypothesis, it tends to $0$ for 
small $\varepsilon$, uniformly in $t$. For the 
integral in the second line one obtains by partial integration 
\begin{align}
  & \int_0^t \! ds \, \ad{s}(\biV_{\varepsilon,n} - \biV_n)
  \bSigma_{0,n,k}(s) \nonumber \\
&  = 
  \int_0^t \! dr \, \ad{r}(\biV_{\varepsilon,n} - \biV_n) \, \bSigma_{0,n,k}(r)
  \nonumber \\
&  - \int_0^t \! ds \, \int_0^s \! dr \, \ad{r}(\biV_{\varepsilon,n} - \biV_n)
  \, \tfrac{d}{ds} \bSigma_{0,n,k}(s) \, . 
\end{align}
As was shown for $k = 1$, the norm of the integral
$\int_0^s \! dr \, \ad{r}(\biV_{\varepsilon,n} - \biV_n)$ vanishes for
small $\varepsilon$ in norm, uniformly in $s$. 
Moreover $s \mapsto \bSigma_{0,n,k}(s)$ and its derivative are
uniformly bounded for $s$ in compact sets. Therefore, also the 
integral in the second lime
vanishes in norm for small $\varepsilon$,
uniformly in $t$. Hence
$\varepsilon \rightarrow \bSigma_{\varepsilon, k+1}(t)$
converges uniformly to $\bSigma_{0, k+1}(t)$, which completes 
the proof.


\vspace*{5mm}
\noindent {\Large \bf Acknowledgment} \\[1mm]
I would like to thank Teun van Nuland for comments during the
early stages of this project.

\end{document}